\documentclass[11pt]{article}
\usepackage[a4paper, total={16cm, 24cm}]{geometry}

\usepackage{natbib}
\renewcommand\tableofcontents{\listoftoc*{toc}} % no "Contents" header

\usepackage{authblk}
\author[1]{Haris Aziz}
\author[2]{Jiarui Gan}
\author[3]{Grzegorz Lisowski}
\author[1]{Ali Pourmiri}

\affil[1]{UNSW Sydney, Australia} 
\affil[2]{University of Oxford, United Kingdom} 
\affil[3]{University of Groningen, The Netherlands} 
\date{}

\usepackage{amsmath}
\usepackage{array,xspace,multirow,hhline,graphicx,xcolor,tikz,colortbl,tabularx,amsmath,amssymb,amsfonts}
\usetikzlibrary{shapes}
\usetikzlibrary{arrows}
\usetikzlibrary{decorations.pathreplacing,calligraphy}

\usepackage{caption,tabularx,booktabs}

\usepackage{graphicx}
\usepackage{caption}
\usepackage{subcaption}
\usepackage{collcell}
\usepackage{hhline}
\usepackage{pgf}
\usepackage{pgfplots}

\usepackage{booktabs}

\usepackage{wrapfig}
\usepackage{amsthm}
\newcommand\eat[1]{}
\usepackage{pgfplots}
\usepackage{bbm}
\usepackage{verbatim,ifthen}
\usepackage{pifont}
\usepackage{ifthen}
\usepackage{pgflibraryshapes}

\usepackage{nicefrac}
\usetikzlibrary{arrows}
\usepackage{ltxtable}
\usepackage{longtable}
\usepackage[normalem]{ulem}
	\usepackage{varioref}

\usepackage{calc}
\newsavebox\CBox
\newcommand\hcancel[2][0.5pt]{%
  \ifmmode\sbox\CBox{$#2$}\else\sbox\CBox{#2}\fi%
  \makebox[0pt][l]{\usebox\CBox}%  
  \rule[0.5\ht\CBox-#1/2]{\wd\CBox}{#1}}

\tikzset{
  jumpdot/.style={mark=*,solid},
  excl/.append style={jumpdot,fill=white},
  incl/.append style={jumpdot,fill=black},
  rexcl/.append style={jumpdot,color=red,fill=white},
  rincl/.append style={jumpdot,fill=black,color=red},
}
\newcommand{\stkout}[1]{\ifmmode\text{\sout{\ensuremath{#1}}}\else\sout{#1}\fi}

\usepackage[ruled]{algorithm2e} % For algorithms

\SetAlFnt{\small}
\SetAlCapFnt{\small}
\SetAlCapNameFnt{\small}
\SetAlCapHSkip{0pt}
\IncMargin{-\parindent}

\usepackage{amsmath}
\usepackage{mathtools}

\DeclarePairedDelimiter\floor{\lfloor}{\rfloor}

		\newcommand{\haris}[1]{{\textcolor{black}{Haris says: #1}}}
				\newcommand{\harisnew}[1]{{\textcolor{black}{#1}}}

\definecolor{gray(x11gray)}{rgb}{0.75, 0.75, 0.75}
\def\colorModel{rgb} %You can use rgb or hsb

\newcommand\ColCell[1]{
  \pgfmathparse{#1<0.5?1:0}  %Threshold for changing the font color into the cells
    \ifnum\pgfmathresult=0\relax\color{white}\fi
  \pgfmathsetmacro\compA{1-#1}      %Component R or H
  \pgfmathsetmacro\compB{1-#1/1.5} %Component G or S
  \pgfmathsetmacro\compC{1}      %Component B or B
  \edef\x{\noexpand\centering\noexpand\cellcolor[\colorModel]{\compA,\compB,\compC}}\x #1
  } 
\newcolumntype{E}{>{\collectcell\ColCell}m{0.5cm}<{\endcollectcell}}  
	\usepackage{boxedminipage}
			\usepackage{xspace}

\newcommand{\ali}[1]{\textcolor{red}{\textbf{Ali says:} #1}}

\newcommand{\greg}[1]{\textcolor{red}{\textbf{Grzegorz says:} #1}}

\newcommand{\gregR}[1]{\textcolor{black}{#1}}
\newcommand{\gregRN}[1]{\textcolor{black}{#1}}

	\renewcommand{\Pr}[1]{\ensuremath{\operatorname{\mathbf{Pr}}\left[#1\right]}}

	\newcommand{\Var}[1]{\ensuremath{\operatorname{\mathbf{Var}}\left[#1\right]}}

	\newcommand{\Ex}[1]{\ensuremath{\operatorname{\mathbf{E}}\left[#1\right]}}

			\newcommand{\pbDef}[3]{%
			\smallskip
			\noindent
			\begin{center}
			\begin{boxedminipage}{0.98 \columnwidth}
			#1\\[5pt]
			\begin{tabular}{l p{0.75 \columnwidth}}
			Input: & #2\\
			Question: & #3
			\end{tabular}
			\end{boxedminipage}
			\end{center}
			\smallskip
			}

	\usepackage{xr}
		
\newcommand{\G}{\mathcal{G}}

\newtheorem{theorem}{Theorem}
\newtheorem{proposition}{Proposition}
\newtheorem{corollary}{Corollary}
\newtheorem{lemma}{Lemma}
\newtheorem{example}{Example}

\begin{document}
		\title{The Team Order Problem: Maximizing the Probability of Matching Being Large Enough }

%\titlerunning{The Team Order Problem}

\maketitle

\begin{abstract}
We consider a matching problem, which is meaningful in team competitions, as well as in information theory, recommender systems, and assignment problems. In the competitions which we study, each competitor in a team order plays a match with the corresponding opposing player. The team that wins more matches wins.  We consider a problem where the input is the graph of probabilities that a team 1 player can win against the team 2 player, and the output is the optimal ordering of team 1 players given the fixed ordering of team 2. Our central result is a polynomial-time approximation scheme (PTAS) to compute a matching whose winning probability is at most $\varepsilon$ less than the winning probability of the optimal matching. We also provide tractability results for several special cases of the problem, as well as an analytical bound on how far the winning probability of a maximum weight matching of the underlying graph is from the best achievable winning probability.
 
\end{abstract}

\section{Introduction}

Bipartite matching underpins several impactful problems in allocation and market design problems including kidney allocation, adword auctions, on demand taxi allocation, refugee assignment, or school choice~(see, e.g., \cite{EIV23a}).  
We consider a fundamental matching problem with an underlying weighted bipartite graph where each edge weight has weight between 0 and 1. Instead of focusing on the classical objective of maximizing the total weight of the matching, we focus on a different objective with a probabilistic interpretation: We want to compute a matching that maximizes the probability of reaching a target size. This problem 
can model several scenarios, including that of the so called \emph{team order problem}.
		
\gregR{One of the most relevant applications of our setting is the rivalry between teams of contestants. }Consider team competitions in which both teams put forward an ordering of \gregR{their} players. The contestants then play matches against the corresponding contestants from the opposing team. The team that wins more matches wins the overall competition. 
Such competitions are not only held in various inter-club tennis competitions, the same format is also used in international table tennis and badminton competitions, such as the Corbillon Cup, Swaythling Cup, Thomas Cup, and the Olympic Games. 
\harisnew{We focus on the problem in which one team's order is fixed (as is the case in many situations where the home team commits to an ordering) and the other team wants to compute the optimal ordering. As the ordering of one team is fixed, the problem of computing the other team's ordering is essentially a competitor matching problem. }

The problem of finding a way to maximize the number of achieved goals by setting an appropriate line-up is not limited to sport competitions. \harisnew{Indeed, it admits several other motivations in competitive contexts such as politics (fielding political candidates in different constituencies against candidates of a rival party).}
Our problem also provides a perspective into finding \emph{durable matchings}. Suppose that we are given the probability of success of various partnerships. 
For example a partnership could represent a job placement or allocation of refugee family to a council (see, e.g., \cite{BFH+18a,AGP+21a}). A typical objective could be maximizing the expected number of partnerships. However, another meaningful objective that is centred around a particular target could be to maximize the probability of having a target number of successful partnerships\gregR{, which maps to the objective that we study}. 
Another \gregR{potential application of our research relates to \emph{information networks} (see, e.g., \cite{li2021inferring}). Suppose that we are given such a network,} represented by a flow network. \gregR{There, each} edge has a reliability probability of a message reaching the other side\gregR{, and we} want to find a flow maximizing probability of delivering a target number of messages. Finally, \gregR{our research is motivated by its applications in \emph{recommendation systems} (see, e.g., \cite{mohamed2019recommender}).} Suppose that a ranked list of recommendations needs to be displayed with each item having a probability of being clicked depending on its position in the ranking list. One may want to maximize the probability of having a target number of items being clicked\gregR{, which can be captured by our problem.} 
  		We explore the following questions. 

  		\begin{quote}
  		\emph{How hard is the team order problem? Under what conditions is it easy to solve? What are reasonable approximation approaches for the problem?}
  		\end{quote}

We note that the problem that we study in this paper is closely related to the maximum-weight matching problem. There, we are given a bipartite graph, where each edge is assigned a weight, and the objective is to find a matching with the maximum sum of weights. \gregRN{In fact, our results reflect that finding the solution to that problem provides a good approximation of the optimal solution.} However, the problem we study is substantially more complex. Indeed, for an instance of the team order problem to be positive, we require that the weights in a selected matching are large enough for some subset of edges, instead of maximizing their global sum. \gregRN{Furthermore, given the strategic games interpretation of our setting, our results concern the computation of the optimal response to the opponent choice, which is an important step towards the study of equilibria in this setting.}

  		\paragraph{Contributions.}

We first show that the winning probability of a given matching (line-up) can be computed in polynomial time \gregR{(Proposition \ref{pro:winning})}.
\gregR{Subsequently, we show that in certain settings computing an optimal line-up is tractable. In particular, when} the input winning probability of each partnership takes its value from a size-three set $\{\alpha, \beta, 0\}$ 
we show that the optimal matching can be computed in polynomial time \gregR{(Theorem \ref{thm:ab0})}. \gregRN{While we conjecture that the team order problem is hard in the general case, we show that it is tractable for practical purposes.}
\gregR{Our central result} is a \emph{polynomial-time approximation scheme (PTAS)}\footnote{\gregR{A PTAS is a scheme which, for every instance of a problem and $\varepsilon>0$, provides an approximate solution based on $\varepsilon$.}} to compute a matching whose winning probability is at most \gregR{$\varepsilon$} less than the winning probability of the optimal matching \gregR{(Theorem \ref{thm:alg2})}. Although the winning probability is not a linear objective, we show that the general problem of computing an optimal matching can be solved via integer linear programming.  \gregR{Also, we provide an analytical bound on how far the winning probability of a maximum weight matching is from the best achievable winning probability.}
%The full proofs of all the statements are \gregR{provided} in the appendix. 

  	\section{Related Work}
	\gregR{Our results are relevant to a number of research direction in multi-agent systems.}

 \paragraph{\gregR{Matching Theory.}}
  Matching problems have been widely studied in combinatorial optimization. The standard objectives typically focus on maximizing the weight of the matching~(see, e.g., \cite{BAM09a,LoPl09a}). In our context, maximizing the weight of the underlying weighted bipartite graph gives us a matching maximizing the expected number of matches won. Our objective is different as we want to maximize the probability of winning a target number of matches. 
  The paper most relevant to our work is by Tang et al.~\cite{TSL09a}, which concerns
  the same setting but considered different problems. It takes an economic design approach  and presents necessary and sufficient conditions, ensuring that truthful reporting and maximal effort in matches are equilibrium strategies. 
  \gregRN{We note that the probabilistic approach in matching has been previously studied. E.g., Aziz et al. \cite{aziz2020stable} studied the stable matching problem with uncertain preferences.}

 \paragraph{\gregR{Manipulation of Competitions.}}
  Within the wider topic of manipulations in competitions, there have been several papers on identifying conditions or manipulations under which a certain team or player can win. A notable example is manipulating the draw of a balanced knockout tournament to maximize the probability of a certain player winning, i.e., the \emph{tournament fixing problem}~\cite{Will15a,AGM+14a,vu2009complexity}. %In sports knockout tournaments, it is very unlikely that someone can unilaterally rig the draw in someone's favor. On the other hand, strategizing the contestant ordering in weekend team competitions is more plausible. 
  Similarly, there has also been algorithmic research on round-robin formats to understand which teams have a chance to win the overall tournament~\cite{KePa01a,ABF+15a}.
	
\paragraph{\gregR{Colonel Blotto Game.}}  	\gregR{Furthermore, } the team line-up setting bears resemblance to Colonel Blotto Games which are two-player zero-sum games in which two armies fight in $n$ battle fields with each battle being won by the army that had more troops in the battle (see, e.g., \cite{Robe06b,ShWe78a}). The armies are interested in  maximizing a weighted sum of utilities from the battlefields where they gain victories. Although the team-line-up setting is similar in that each battle corresponds to a match, in Colonel Blotto games, the armies have more flexibility in shuffling their troops around. Secondly, in Colonel Blotto games the outcome of a battle depends on the \emph{number} of troops of each army whereas in the team line up setting, the outcome of a match depends on the identities of the respective players.  \harisnew{Independent of our work, Gaonkar et al.~\citep{GRW22} considered a version of Blotto games in which every resource is unique and non-interchangeable which makes it close to our setting. They motivate the problem as \textit{derby games} in which teams assign each resource to a particular round and wins a payoff corresponding to that round if they win the round.} \gregR{ We note, however, that our work differs significantly from their results. In particular, they examine Nash equilibria, which are not the focus of our study. Furthermore, they do not take the information on winning probabilities into account and do not focus on algorithmic issues.}

\paragraph{\gregR{Sequential Games.}} \gregR{ Games between teams of players in which the ordering of contestants matters gained a substantial interest in recent literature. Fu et al. \cite{10.1257/aer.20121469} studied the scenario in which teams compete in a number of games between pairs of players.  Within this setting they investigated how the sequencing of those matches impacts the result. We note that, in contrast to our study, the games they considered are also based on private rewards for the individual players. 
Furthermore, Konishi et al. \cite{KONISHI2022274} studied the problem of whether the equilibrium winning probability in such games depends on whether matches are held simultaneously, or sequentially. Also, Fu and Lu \cite{fu2020equilibrium} explored the topic of how teams can strategically assign contestants to time-slots of a sequential competition. % depending on their strengths. 
Let us further note that in contrast to our work the discussed papers on sequential games do not focus on computational complexity.}

\paragraph{\gregR{Nominee Selection.}}
\gregR{Our setting is also related to the literature on  strategic selection of group members participating in a competition. In social choice theory, this problem relates to the process of selecting representative for the elections (see, e.g., \cite{faliszewski2016hard,DBLP:conf/aaai/borodin2019primarily}). Regarding sport competitions, our problem relates to choosing a coalition member to participate in a tournament (see, e.g.,  \cite{aamastours,10.1007/978-3-031-20614-6_14}).}

  		\section{The Team Order Problem}
		
  	We consider the following problem setting.

  		\begin{itemize}
  			\item % Set of teams $T=\{T_1,\ldots, T_m\}$
  			Two teams $T_1$ and $T_2$ are to play a team competition.
  			\item Each team $T_i$ has $n$ contestants $t_i^1,\dots, t_i^n$.
  			\item We have information about the winning probability $p(t_i^a, t_j^b)$ of any contestant $t_i^a$ against any other contestant $t_j^b$. The instance is said to be \emph{degenerate} if all the winning probabilities are 0 or 1. 
  		\end{itemize}

		In the competition each team is required to report a line-up, i.e., an ordering $i_1, \dots, i_n$ of its contestants, which is a permutation of $1,\dots,n$. Then each contestant $t_i^{i_k}$ plays a match with the corresponding contestant $t_j^{j_k}$. 
		The team that wins at least $\floor{\frac{n}{2}}+1$ matches wins the competition. 
		All of our results hold equally well if the target $\floor{\frac{n}{2}}+1$ is replaced by some generic target $L$ that is higher or lower than $\floor{\frac{n}{2}}+1$.

  		We will consider computational problems related to strategic aspects of deciding on a line-up of players of a team. 
		Our primary consideration is the following problem of computing the best response to a given line-up of the opposing team.

  		 \pbDef{\textsc{Team Order}}{A target probability $q\in [0,1]$ and a finite set Team Order instance, and a (deterministic) line-up of team $T_2$.}
  		{Does there exists a line-up for team $T_1$ under which the probability of $T_1$ winning against $T_2$ is at least $q$?}
								 
		 Without loss of generality, we can assume that the line-up of $T_2$ is fixed to $t_2^1,\dots, t_2^n$ when dealing with the \textsc{Team Order} problem.
		From a graph theoretic perspective, it can be captured by a weighted and complete bipartite graph $G=(T_1\cup T_2,E,p)$. The weight of an edge $(t_i^a, t_j^b)$ is winning probability $p(t_i^a, t_j^b)$ of any contestant $t_i^a$ against any other contestant $t_j^b$. We will call $G$ the \gregR{\emph{corresponding graph}}. 		
		The line-ups of the two teams correspond to a perfect matching in $G$, which pairs up every player in $T_1$ with a unique player in $T_2$.
		 \gregR{Assuming that matches are independent,} we are interested in computing a perfect matching $M$ whose edge weights maximize the winning probability: 
		 \[
		 	\sum_{\stackrel{S\subseteq \{1,..,n\}}{|S|\ge \lfloor \frac{n}{2}\rfloor+1}} \prod_{i\in S}^n p(t_1^i, t_2^{M(i)}) \prod_{i \notin S}^n \left(1 - p(t_1^i, t_2^{M(i)}) \right),
		 \]
		 where $M(i)$ denotes the index of the player in $T_2$ who is \gregR{matched} with $t_1^i$, and each $S$ is an outcome of the competition represented as the set of players in $T_1$ who win against their opponents.
		 For simplicity, we will also write the probabilities as $p_{i,j} = p(t_1^i, t_2^j)$.
		 					
In fact, even when the line-ups of both teams are given, it is not immediately clear that the winning probability of $M$ can be computed efficiently, since there are exponentially (in $n$) many possible outcomes of the competition.
One way that leads to a polynomial-time algorithm to compute this probability is via dynamic programming, which results in the proposition below.

  \begin{proposition}\label{pro:winning}
	Given the line-ups of $T_1$ and $T_2$, the winning probability of each team can be computed in time $O(n^2)$.
  	\end{proposition}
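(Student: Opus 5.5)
Fixing both line-ups fixes a perfect matching $M$, so the $n$ matches are independent Bernoulli trials with success probabilities $q_i = p_{i,M(i)}$ for $i=1,\dots,n$. The number of matches won by $T_1$ then has a Poisson-binomial distribution, and $T_1$'s winning probability is the upper tail $\Pr{X \ge \lfloor n/2\rfloor+1}$, where $X=\sum_i X_i$ and $X_i\sim\mathrm{Bernoulli}(q_i)$ are independent. My plan is to compute the whole distribution of $X$ by dynamic programming over the players, instead of summing over the exponentially many sets $S$ in the displayed formula.

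For $0\le k\le n$ and $0\le j\le k$, define $D[k][j]$ as the probability that exactly $j$ of the first $k$ matches are won by $T_1$. The base case is $D[0][0]=1$, with $D[0][j]=0$ for $j>0$. Match $k$ is independent of the first $k-1$, so conditioning on its outcome gives the recurrence
\[
D[k][j] = q_k\, D[k-1][j-1] + (1-q_k)\, D[k-1][j],
\]
where out-of-range entries are taken to be $0$. Correctness follows by induction on $k$ using the law of total probability. This is exactly a regrouping of the sum over $S$ according to whether $k\in S$, which I would note to tie the recurrence back to the paper's formula.

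Each entry takes $O(1)$ arithmetic operations, and there are $O(n^2)$ entries, so the table is filled in $O(n^2)$ time. The winning probability of $T_1$ is $\sum_{j\ge \lfloor n/2\rfloor+1} D[n][j]$, which costs a further $O(n)$. For $T_2$, the same recursion is run with $1-q_k$ in place of $q_k$. Equivalently, and assuming matches have no draws, $T_2$ wins exactly when $T_1$ wins at most $n-\lfloor n/2\rfloor-1$ matches, and this tail can be read off the same final row. The identical argument handles a generic target $L$, as the paper requires.

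I do not expect a real obstacle. The only subtle points are stating the independence assumption explicitly, since it is what justifies the product recurrence, and checking that the $O(n^2)$ bound counts arithmetic operations on the input probabilities. If needed, I would add a remark that the bit-size of the entries stays polynomial, because each $D[k][j]$ is a polynomial of degree at most $k$ in the inputs.
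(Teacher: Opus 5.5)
Your proposal is correct and is essentially the paper's own proof: the same $O(n^2)$ dynamic program indexed by (matches played, matches won), the same recurrence obtained by conditioning on the outcome of the last match, and the same final step of summing the tail $j \ge \lfloor n/2\rfloor+1$ of the last row. Your explicit handling of $T_2$'s winning probability and of a generic target $L$ are small additions that the paper leaves implicit.
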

	\begin{comment}
  	\begin{proof}
  The algorithm is via dynamic programming. Let $p_j$ be the probability of the first team winning the \gregR{$j^{\text{th}}$} match according to the line-ups. 
  We keep track of an array $A$, where $A[i,j]$ denotes the probability of $i$ wins after the first $j$ matches. 
  In such an array, $A[0,1]=1-p_1$ and \harisnew{$A[0,0]=1$, $A[i,0]=$0 for $i \neq 0$}, 
\harisnew{and $A[0,j] = \prod_{k=1}^{j} (1-p_k)$ for all $j \in [n].$}
 Furthermore, we can compute a given entry  $A[i,j]$  by $A[i,j]=A[i-1,j-1] \cdot p_j+A[i,j-1] \cdot (1-p_j)$. 
  The array takes size $n(n+1)$ so the probability of first team winning a given number of matches $k$ can be computed by reading off the entry $A[k,n]$. In order to compute the probability of the first team winning more matches, we simply need to find the sum $\sum_{k\geq \floor{\frac{n}{2}}+1}A[k,n].$

   \end{proof}
\end{comment}
   
We present an example below to illustrate the problem.

     						 		\begin{example}
   								\label{exp:max-weight}
								
									\gregR{Take an instance with the input winning probabilities as in Table~\ref{tb:1}.
									Also, Team $T_1$ has $3!$ different line-ups $O_1, \dots, O_6$ as illustrated in Figure~\ref{fig:1}.}

             	\begin{table}[h]
					 		\renewcommand{\arraystretch}{1.1}
					 		\begin{center}
					 	%	\small
					 		\begin{tabular}{ c| c c  c }
					 					& $t_2^1$	& $t_2^2$	& $t_2^3$	\\
					 		\hline %------------------------------------
					 		$t_1^1$		& $0.9^*$		& $\underline{1}$ 		& $1$ 		\\
					 		$t_1^2$		& $0.5$		& $0.9^*$		& $\underline{1}$ 		\\
					 		$t_1^3$		& $\underline{0}$ 		& $0.5$		& $0.9^*$		\\
					 		%\bottomrule
					 		\end{tabular}
					 		\end{center}
				 		
							\caption{Each entry $(i,j)$ is the probability $p(t_{1}^i, t_2^j)$.}
						\label{tb:1}
					 		% \caption{An example showing that {\sc MaxPF} and {\sc MaxEM} are different, where each entry $(a,b)$ is the probability $p(t_{2}^a, t_1^b)$.}
					 		\end{table}

   \begin{figure}[t]
   \begin{center}
   		\scalebox{.9}{\hspace{4em}{	\begin{tikzpicture}
   	\draw
   	(1, 1) node[circle, black, draw, inner sep=.15](1-1){$t^3_1$}
   	(1, 1.8) node[circle, black, draw, inner sep=.15](1-2){$t^2_1$}
   	(1, 2.6) node[circle, black, draw, inner sep=.15](1-3){$t^1_1$}
   	(2.7, 1) node[circle, black, draw, inner sep=.15](2-1){$t^3_2$}
   	(2.7, 1.8) node[circle, black, draw, inner sep=.15](2-2){$t^2_2$}
   	(2.7, 2.6) node[circle, black, draw, inner sep=.15](2-3){$t^1_2$}
   	(4, 1) node[circle, black, draw, inner sep=.15](3-1){$t^3_1$}
   	(4, 1.8) node[circle, black, draw, inner sep=.15](3-2){$t^2_1$}
   	(4, 2.6) node[circle, black, draw, inner sep=.15](3-3){$t^1_1$}
   	(5.7, 1) node[circle, black, draw, inner sep=.15](4-1){$t^3_2$}
   	(5.7, 1.8) node[circle, black, draw, inner sep=.15](4-2){$t^2_2$}
   	(5.7, 2.6) node[circle, black, draw, inner sep=.15](4-3){$t^1_2$}
   	(7, 1) node[circle, black, draw, inner sep=.15](5-1){$t^3_1$}
   	(7, 1.8) node[circle, black, draw, inner sep=.15](5-2){$t^2_1$}
   	(7, 2.6) node[circle, black, draw, inner sep=.15](5-3){$t^1_1$}
   	(8.7, 1) node[circle, black, draw, inner sep=.15](6-1){$t^3_2$}
   	(8.7, 1.8) node[circle, black, draw, inner sep=.15](6-2){$t^2_2$}
   	(8.7, 2.6) node[circle, black, draw, inner sep=.15](6-3){$t^1_2$}
   	(1, 5) node[circle, black, draw, inner sep=.15](7-1){$t^3_1$}
   	(1, 5.8) node[circle, black, draw, inner sep=.15](7-2){$t^2_1$}
   	(1, 6.6) node[circle, black, draw, inner sep=.15](7-3){$t^1_1$}
   	(2.7, 5) node[circle, black, draw, inner sep=.15](8-1){$t^3_2$}
   	(2.7, 5.8) node[circle, black, draw, inner sep=.15](8-2){$t^2_2$}
   	(2.7, 6.6) node[circle, black, draw, inner sep=.15](8-3){$t^1_2$}
   	(4, 5) node[circle, black, draw, inner sep=.15](9-1){$t^3_1$}
   	(4, 5.8) node[circle, black, draw, inner sep=.15](9-2){$t^2_1$}
   	(4, 6.6) node[circle, black, draw, inner sep=.15](9-3){$t^1_1$}
   	(5.7, 5) node[circle, black, draw, inner sep=.15](10-1){$t^3_2$}
   	(5.7, 5.8) node[circle, black, draw, inner sep=.15](10-2){$t^2_2$}
   	(5.7, 6.6) node[circle, black, draw, inner sep=.15](10-3){$t^1_2$}
   	(7, 5) node[circle, black, draw, inner sep=.15](11-1){$t^3_1$}
   	(7, 5.8) node[circle, black, draw, inner sep=.15](11-2){$t^2_1$}
   	(7, 6.6) node[circle, black, draw, inner sep=.15](11-3){$t^1_1$}
   	(8.7, 5) node[circle, black, draw, inner sep=.15](12-1){$t^3_2$}
   	(8.7, 5.8) node[circle, black, draw, inner sep=.15](12-2){$t^2_2$}
   	(8.7, 6.6) node[circle, black, draw, inner sep=.15](12-3){$t^1_2$};
	
   	\draw [line width=.25mm, black] (1-1) -- node[below, pos=.2] {$0$} (2-3);
   	\draw [line width=.25mm, black] (1-2) -- node[above, pos=0.9 ] {$.9$} (2-2);
   	\draw [line width=.25mm, black] (1-3) -- node[above, pos=.2] {$1$} (2-1);
   	\draw [line width=.25mm, black] (3-1) -- node[below, pos=.4] {$0.5$} (4-2);
   	\draw [line width=.25mm, black] (3-2) -- node[above,pos=.5] {$0.5$} (4-3);
   	\draw [line width=.25mm, black] (3-3) -- node[above,pos =.6] {$1$} (4-1);
   	\draw [line width=.25mm, black] (5-1) -- node[above,pos = .4] {$0$} (6-3);
   	\draw [line width=.25mm, black] (5-2) -- node[below, pos=.4] {$1$} (6-1);
   	\draw [line width=.25mm, black] (5-3) -- node[above, pos=.3] {$1$} (6-2);
   	\draw [line width=.25mm, black] (7-1) -- node[above] {$0.9$} (8-1);
   	\draw [line width=.25mm, black] (7-2) -- node[above] {$0.9$} (8-2);
   	\draw [line width=.25mm, black] (7-3) -- node[above] {$0.9$} (8-3);
   	\draw [line width=.25mm, black] (9-1) -- node[below, pos=.3] {$0.5$} (10-2);
   	\draw [line width=.25mm, black] (9-2) -- node[above, pos=.2] {$1$} (10-1);
   	\draw [line width=.25mm, black] (9-3) -- node[above] {$0.9$} (10-3);
   	\draw [line width=.25mm, black] (11-1) -- node[above] {$0.9$} (12-1);
   	\draw [line width=.25mm, black] (11-2) -- node[below, pos=.3] {$0.5$} (12-3);
   	\draw [line width=.25mm, black] (11-3) -- node[above, pos=.2] {$1$} (12-2);
	
   	\node[text width = 4cm] at (2.8,4) {$O_1=( t^1_1, t^2_1, t^3_1)$};
   	\node[text width = 4cm] at (5.75,4) {$O_2=( t^1_1, t^3_1, t^2_1)$};
   	\node[text width = 4cm] at (8.85,4) {$O_3=( t^2_1, t^1_1, t^3_1)$};
   	\node[text width = 4cm] at (2.8,0) {$O_4=( t^3_1, t^2_1, t^2_1)$};
   	\node[text width = 4cm] at (5.75,0) {$O_5=( t^2_1, t^3_1, t^1_1)$};
   	\node[text width = 4cm] at (8.85,0) {$O_6=( t^3_1, t^1_1, t^2_1)$};
   	\end{tikzpicture}}}
    \end{center}
   \caption{Graph theoretic view of Example~\ref{exp:max-weight}. There are $3!$ different line-ups for $T_1$  and each line-up is a perfect matching and has its own winning probabilities illustrated on the edges.
   \label{fig:1}}
   \end{figure}
  
									Suppose that $T_2$ uses the line-up $(t_2^1, t_2^2, t_2^3)$. 
									If $T_1$ responds with $(t_1^3, t_1^1, t_1^2)$ (underlined entries), the probability that they beat $T_2$ is $1$, as they will win two matches with certainty. On the other hand, if $T_1$ responds with $(t_1^1, t_1^2, t_1^3)$ (starred entries), their winning probability becomes
									\[
  			\underbrace{0.9 \times 0.9 \times 0.9}_{\text{prob. of winning all the matches}}+ \underbrace{0.9 \times 0.9 \times (1-0.9) \times 3}_{\text{prob. of winning exactly two matches}} = 0.972.
  		\]

     						 \end{example}

Indeed, in the above example, the line-up $(t_1^1, t_1^2, t_1^3)$ corresponds to the perfect matching with the maximum total weight in this instance. This demonstrates that weight maximizing matchings may not be optimal solutions to \textsc{Team Order}. 
The next example shows that such matchings fail to even provide any approximation guarantee to \textsc{Team Order}.
		
		\begin{table}[h]
     						 			\renewcommand{\arraystretch}{1.2}
     						 			\begin{center}
     						 			%\small
     						 		\scalebox{1}{	\begin{tabular}{ c | c c  c ccccc }
     						 						& $t_2^1$	& $t_2^2$	& $t_2^3$ & $t_2^4$& $t_2^5$ & $t_2^6$ & $t_2^7$ 	\\
     						 		\hline
     						 		 %------------------------------------
     						 			% $t_1^1$	& $t_2^1$	& $t_2^2$	& $t_2^3$ & $t_2^4$& $t_2^5$ & $t_2^6$ & $t_2^7$ 	\\
     						 			$t_1^1$			& $0$	& $0$	& $0$ & $0.5$& $1$ & $1$ & $1$ 	\\
     						 			$t_1^2$		& $0$	& $0$	& $0$ & $0$& $0.5$ & $1$ & $1$ 	\\
     						 			$t_1^3$		& $0$	& $0$	& $0$ & $0$& $0$ & $0.5$ & $1$ 	\\
     						 			$t_1^4$		& $0$	& $0$	& $0$ & $0$& $0$ & $0$ & $0.5$ 	\\
     						 			$t_1^5$		& $0$	& $0$	& $0$ & $0$& $0$ & $0$ & $0$ 	\\
     						 			$t_1^6$		& $0$	& $0$	& $0$ & $0$& $0$ & $0$ & $0$
     						 			\\
     						 				$t_1^7$		& $0$	& $0$	& $0$ & $0$& $0$ & $0$ & $0$ 	\\
     						 			%\bottomrule
     						 			\end{tabular}}
     						 			\end{center}
  						 			
     						 			% \caption{An example showing that {\sc MaxPF} and {\sc MaxEM} are different, where each entry $(a,b)$ is the probability $p(t_{1}^a, t_2^b)$.}
     									\caption{Each entry $(i,j)$ is the probability $p(t_{1}^i, t_2^j)$.}
   									\label{tb:2}
     						 			\end{table}
									
     						 		\begin{example}
								\label{exp:max-weight-2}
   	Suppose that $n=7$ and the input winning probabilities are given in Table~\ref{tb:2}.
     						 			The maximum weight matching gives the guarantee of winning three matching with certainty but losing all the others, and hence probability $0$ of winning the competition. On the other hand, the matching that gives probability $0.5$ of winning four matches wins the competition with a non-zero probability. The example also shows that the maximum weight matching cannot approximate the highest winning probability within any multiplicative factor.

     						 			\end{example}
		
   								In the above example, the better solution has more balanced winning probabilities over the matches. In view of this, one may conjecture that a \emph{leximin-maximizing} matching is optimal for the \textsc{Team Order} problem.\footnote{\gregR{A vector $x$ is leximin-greater than a vector $y$ if $x$ and $y$ are in non-decreasing order and $x$ is lexicographically greater than $y$.}} However, the next example disproves this conjecture: a {leximin-maximizing} matching may not be optimal, even when it is also maximum weight matchings.

\begin{table}[h]
      							 			\renewcommand{\arraystretch}{1.2}
      							 			\begin{center}
      							 			%\small
      							 		\scalebox{1}{	\begin{tabular}{ c | c c  c ccccc }
      							 						& $t_2^1$	& $t_2^2$	& $t_2^3$ \\
      							 		\hline
      							 		 %------------------------------------
      							 			% $t_1^1$	& $t_2^1$	& $t_2^2$	& $t_2^3$ & $t_2^4$& $t_2^5$ & $t_2^6$ & $t_2^7$ 	\\
      							 			$t_1^1$			& $0.9$	& $0.5$&$1$\\
      							 			$t_1^2$	& $0.5$	& $0.1$&$1$\\
      										$t_1^3$	& $0$	& $0$&$1$\\
      							 			%\bottomrule
      							 			\end{tabular}}
      							 			\end{center}
      							 			\caption{Each entry $(i,j)$ is the probability $p(t_{1}^i, t_2^j)$.
      							 			\label{tb:third}}
      							 			\end{table}
										
      						 		\begin{example}
								\label{exp:leximin}
      						 			Suppose that $n=3$ and one match is guaranteed to be won as shown in Table~\ref{tb:third}.
    									The edge weights of the maximum weight matchings are (1) $0.5, 0.5, 1$, or (2) $0.1, 0.9, 1$, and the first one is a leximin-maximizing matching.
    %  						 			\begin{enumerate}
    %  						 				\item $0.5, 0.5, 1$ or
    %  						 				\item $0.1, 0.9, 1$
    %  						 			\end{enumerate}
      						 			However, the winning probabilities of these two matchings are $1-0.25=0.75$ and $1-0.09=0.89$, respectively.
    									%In that case there were two maximum size matchings but it is better go for the maximum size matching with lopsided weights.

      						 			\end{example}

   \section{Tractable Variants}

   \gregR{In this section we show that \textsc{Team Order}  
   is tractable if there are only two values of probabilities which are greater than 0 in an instance. Moreover, we demonstrate that checking if a team can win with a non-zero probability can be done in polynomial time. Finally, we show that finding the line-up maximizing winning all the matches is tractable.}
  \gregR{Our reasoning in this section is closely related to the \textsc{Maximum Weight Matching} problem. We note that it can be solved in $O(n^3)$ time via the Hungarian algorithm~\cite{kuhn55hungarian}}.

   \pbDef{\textsc{Maximum Weight Matching}}
		{A bipartite graph $G$, weight \gregR{$w(e) \in \mathbb{R}_+$} for each edge $e$ on $G$. }%\greg{$w(e) \in \mathbb{R}_+$?}}
		{Compute a perfect matching $M$ of $G$ that maximizes $w(M) :=\sum_{e\in M}w(e)$.}

\subsection{When Input Probabilities Have Three Values (Including 0)}

\gregR{Let us consider the case in which} the input probabilities are from a set $\{\alpha, \beta, 0\}$ and, without loss of generality, assume that {$\alpha > \beta > 0$}. We note that the problem appears closely connected to a \textsc{Colored Bipartite Matching} problem with two types of colors: given a bipartite graph with red and blue edges, does there exists a matching with (exactly) a certain number of red edges? Although the complexity of this red-blue matching problem is open~\cite{YMS02a}, we show that the optimal line-up problem can be solved in polynomial-time via Algorithm~\ref{alg1}.
We also remark that with this probability set $\{\alpha, \beta, 0\}$ the problem still remains different from \textsc{Maximum Weight Matching}, as we demonstrated via Example~\ref{exp:max-weight}.

\begin{algorithm}[t]
  	\caption{{\sc Iterative Algorithm} }\label{alg1}
  	%\begin{algorithmic}[1]
	
%	\BlankLine
	
  		\KwIn{a \textsc{Team Order} instance $G = (T_1 \cup T_2, E, p)$ where $p_{i,j} \in \{\alpha, \beta, 0\}$, $\alpha > \beta > 0$.}
		\KwOut{an optimal solution to \textsc{Team Order}.}
		
		\BlankLine
		
  		Remove all zero-weight edges of $G$;
		
  		 $opt \leftarrow 0$;
		 
  		\For{ $s=\lfloor \frac{n}{2} \rfloor+1,\ldots n$}{
  		 $M_s \leftarrow$ maximum weight matching of size $s$; \tcp{polynomial-time solvable}
  		\If {$M_s\neq \emptyset$}{
  		$p_s \leftarrow$ winning probability of line-up $M_s$; \tcp{see Proposition \ref{pro:winning}}
		
  		\If{$p_s>opt$}{
  			$opt \leftarrow p_s$;
			
  			$M^* \leftarrow M_s$;
  			}
  		}
  		}
		
		\Return $M^*$. 
		
  \end{algorithm}

\begin{theorem}\label{thm:ab0}
  		Suppose that $G=(T_1\cup T_2, E, p)$ is a \textsc{Team Order} instance with $p_{i,j} \in \{\alpha, \beta, 0\}$ for all $i,j \in \{1,\dots, n\}$. Then an optimal line-up can be computed in polynomial time.  
  		\end{theorem}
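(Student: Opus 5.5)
We will show that Algorithm~\ref{alg1} is correct and runs in polynomial time. The structural observation is that once zero-weight edges are deleted, the winning probability of a line-up depends only on the multiset of positive weights in the induced matching, not on which particular edges realise it. Matches on zero-weight edges are lost with certainty, so they contribute nothing. Hence any line-up is summarised by a matching $M$ in the graph of positive edges, together with a pair $(a,b)$: here $a$ is the number of $\alpha$-edges and $b$ the number of $\beta$-edges of $M$, with $s=a+b=|M|$. Its winning probability is
\[
W(a,b)=\Pr{X_a+Y_b\ge L},\qquad X_a\sim\mathrm{Bin}(a,\alpha),\ Y_b\sim\mathrm{Bin}(b,\beta),
\]
where $L=\lfloor n/2\rfloor+1$. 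Conversely, any matching of size $s$ can be extended arbitrarily to a perfect matching, i.e. a line-up, using zero edges, and this does not change the winning probability. If $s<L$, then $W=0$. So it suffices to optimise over sizes $s\in\{L,\dots,n\}$.

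The key step is a monotonicity claim, which I will prove by a coupling argument. For fixed $s$, $W(a,s-a)$ is nondecreasing in $a$. Replacing one $\beta$-match by an $\alpha$-match stochastically increases the total number of wins. To see this, couple the two matches with a single uniform variable $U$, and declare a win when $U<\alpha$ (respectively $U<\beta$); since $\alpha>\beta$, the $\alpha$-match wins whenever the $\beta$-match does. It follows that among matchings of a fixed size $s$, the best one maximises the number of $\alpha$-edges. Since $w(M)=a\alpha+(s-a)\beta=s\beta+a(\alpha-\beta)$, and $\alpha-\beta>0$, maximising $a$ is the same as taking a maximum weight matching among matchings of size exactly $s$. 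That is precisely the matching $M_s$ computed by the algorithm.

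Next, I need that a maximum weight matching of exactly size $s$ can be computed in polynomial time, or determined not to exist. I would argue this by a standard reduction. Add $n-s$ dummy vertices to each side, each joined to every original vertex on the opposite side with weight $0$, and with no edges between dummies. Then compute a maximum weight perfect matching with the Hungarian algorithm, using a large negative weight (or absence) for the removed edges to detect infeasibility. Each perfect matching of the augmented graph restricts to a matching of size exactly $s$ among the original positive edges, and vice versa. An alternative route is a min-cost flow with flow value $s$.

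Putting this together, the algorithm tries every $s\in\{L,\dots,n\}$, takes $M_s$, and evaluates $W$ for each candidate via Proposition~\ref{pro:winning}. The optimal line-up has some size $s^*\ge L$ of positive part, and $M_{s^*}$ is at least as good as it by the monotonicity claim. So the returned matching, after padding it to a perfect matching with zero edges, is optimal. There are $O(n)$ iterations, each costing $O(n^3)$ for the matching and $O(n^2)$ for the evaluation, giving $O(n^4)$ in total.

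I expect the main obstacle to be the exact-size maximum weight matching. The subtle point is that $W$ is not monotone in $s$ when the numbers of $\alpha$- and $\beta$-edges trade off against each other. This is why one must enumerate over sizes rather than take a single maximum weight matching, and Example~\ref{exp:max-weight} shows that a single maximum weight matching can indeed be suboptimal.
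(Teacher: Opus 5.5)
Your proposal is correct and follows essentially the same route as the paper: delete the zero edges, compute a maximum-weight matching of each exact size $s \ge \lfloor n/2\rfloor + 1$, and use stochastic dominance to show that, at fixed size, more $\alpha$-edges is never worse; you make the dominance explicit with a coupling, and your dummy-vertex reduction supplies the exact-size matching step that the paper delegates to the Hungarian algorithm. One small correction: padding $M_s$ to a perfect matching of the complete graph may be forced to use positive rather than zero edges, but that can only increase the winning probability, so your optimality conclusion is unaffected.
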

  	\begin{proof}
  		Suppose that $M^*$ denotes an optimal line-up. Let $X$ denote a random variable counting   the number of games won by $T_1$ corresponding  to $M^*$. Then $X$ follows a  Poisson Binomial (PB) distribution:
		\begin{equation*}
		\begin{split}
%\begin{aligned}
  	%	\[
  	X \sim PB(\underbrace{\alpha,\ldots,\alpha}_{x}, \underbrace{\beta,\ldots,\beta}_{y}, \underbrace{0,\ldots,0}_{z})=  \\ 
		PB(\underbrace{\alpha,\ldots,\alpha}_{x}, \underbrace{\beta,\ldots,\beta}_{y}),
  		%\]
		\end{split}
		\end{equation*}
  		where $x$, $y$ and $z$ are non-negative integers. 
  		Let us remove all $0$-weight edges from $G$ and call the resulting graph $G'$. Then, 
  	 $M^*$ is a matching of size  $x+y$  in $G'$. Also, any maximum weight matching of size $x+y$, say $M$, has at least $x$ $\alpha$-weight edges. Notice that if  $M$ has at least $x+1$ $\alpha$-weight edges, then Poisson binomial random variable $Y$ corresponding to $M$  stochastically dominates $X$ contradicting the fact that $M^*$ is an optimal line-up.
  	 The argument also suggests that searching \gregR{through}  all matchings of various sizes will hit the optimal line-up.
  	 Note that finding a maximum weight matching of a given size is polynomially solvable. For example, the Hungarian algorithm computes a maximum weight matching of a bipartite graph for each target size~\cite{Kuhn10a}. %\greg{Should Algorithm 1 be mentioned in the proof?}
  		 %  (see Algorithm \ref{alg1}} \ali{any reference?}.
  	 % \haris{The Hungarian algorithm compute a maximum weight matching of a bipartite graph for each target size.}
  		\end{proof}

Similar approaches based on \textsc{Maximum Weight Matching} also lead to efficient algorithms for two variants of \textsc{Team Order}.
First, if the goal is to decide whether $T_1$ can beat $T_2$ with non-zero probability, the problem can be solved in polynomial time.
Specifically, for an instance represented as a graph $G$, we can consider the corresponding graph $G'$ in which edges with weight $0$ are removed. Then, $T_1$ can beat $T_2$ with non-zero probability if and only if $G'$ has a matching of size $\floor{\frac{n}{2}}+1$.
We state this result below.

\begin{corollary}
%For a team order problem instance represented as a graph $G = (T_1 \cup T_2, E, p)$, consider the corresponding graph $G'$ in which the weight-zero edges are removed. Then, $T_1$ can beat $T_2$ with non-zero probability if and only if $G'$ has a matching of size $\floor{n}{2}+1$.
Given the line-up of $T_2$, it can be decided in polynomial time whether there exists a line-up of $T_1$ that beats $T_2$ with a non-zero probability.
\end{corollary}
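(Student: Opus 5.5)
The plan is to reduce the question to a maximum-cardinality matching computation in the graph $G'$ obtained from the corresponding graph $G$ by deleting every edge of weight $0$, exactly as indicated before the statement. Write $L=\floor{\frac{n}{2}}+1$ for the target. The claim to establish is: some line-up of $T_1$ wins with positive probability if and only if $G'$ has a matching of size at least $L$. Given this equivalence, the algorithm computes a maximum matching of $G'$ (for instance with the Hungarian algorithm using unit weights on the edges of $G'$, in $O(n^3)$ time) and compares its size with $L$.

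For the forward direction, take a perfect matching $M$ of $G$ whose winning probability is positive. The winning probability is a sum over outcomes $S$ with $|S|\ge L$ of nonnegative products, so some summand is positive. That summand contains the factor $p_{i,M(i)}$ for every $i\in S$, so all these factors are positive. Hence the edges $\{(t_1^i,t_2^{M(i)}) : i\in S\}$ lie in $G'$ and form a matching of size $|S|\ge L$.

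For the converse, let $M'$ be a matching of $G'$ of size at least $L$ (take exactly $L$ of its edges if needed). Because $G$ is complete bipartite with equal sides, $M'$ extends to a perfect matching $M$ of $G$ by pairing the unmatched vertices arbitrarily. Let $S$ be the set of $T_1$-players matched by $M'$. Consider the outcome in which exactly the players in $S$ win.

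The one point needing care is that this outcome's summand may vanish if some edge of $M\setminus M'$ has weight $1$, since then the loss factor $1-p$ is zero. I would avoid this by choosing the summand differently. Let $S^+$ consist of all $i$ with $p_{i,M(i)}>0$, and consider the outcome in which exactly the players in $S^+$ win. Its summand is a product of positive win probabilities over $S^+$ and of loss factors $1-0=1$ over the complement, so it is positive. Moreover $|S^+|\ge|S|\ge L$, so this outcome is a win for $T_1$, and therefore $M$ wins with positive probability. This completes the equivalence and the proof.
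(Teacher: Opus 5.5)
Your proof is correct and follows the paper's route: the paper simply asserts that $T_1$ can win with non-zero probability if and only if the graph $G'$, obtained by deleting the zero-weight edges, has a matching of size $\lfloor \frac{n}{2}\rfloor+1$, and you supply the justification of both directions that the paper omits. In the converse you could avoid the weight-$1$ issue more directly: the event that every player in $S$ wins has probability $\prod_{i\in S}p_{i,M(i)}>0$, and on that event $T_1$ already wins at least $|S|\ge \lfloor \frac{n}{2}\rfloor+1$ matches.
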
					 				
Second, if the goal is to maximize the probability of winning \emph{all} the matches, the problem reduces to computing a weight maximizing matching, where the weights are the logarithm of the non-zero winning probabilities.
								
  									 \begin{proposition}\label{prop:allmatched}
  									%For $m=2$, 
									%\jiarui{I removed ``For $m=2$''. I think $m$ is not defined in the current version.}
									Given the line-up of $T_2$, the line-up of $T_1$ that maximizes the probability of winning \emph{all} the matches can be computed in polynomial time.
  										 \end{proposition}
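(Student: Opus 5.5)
The plan is to reduce the problem to \textsc{Maximum Weight Matching} on a transformed graph. The probability of winning all matches under a perfect matching $M$ is $\prod_{i=1}^n p_{i,M(i)}$, since the matches are assumed independent. This product is zero exactly when $M$ uses some edge of weight $0$.

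We first remove all zero-weight edges from $G$, obtaining $G'$. We then check in polynomial time, e.g.\ by Hopcroft--Karp or the Hungarian algorithm, whether $G'$ has a perfect matching. If it does not, every line-up wins all matches with probability $0$, so any line-up is optimal and we output an arbitrary one.

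Otherwise, every line-up with positive probability is a perfect matching of $G'$. Since $\log$ is strictly increasing, maximizing $\prod_i p_{i,M(i)}$ over such matchings is equivalent to maximizing $\sum_i \log p_{i,M(i)}$. The weights $\log p_{i,j}\le 0$ are not in $\mathbb{R}_+$ as the \textsc{Maximum Weight Matching} definition requires, so we handle this in one of two ways:
\begin{itemize}
\item use the cost $c(e)=-\log p(e)\ge 0$ and compute a minimum-cost perfect matching; or
\item shift the weights to $w(e)=\log p(e)+C$ with $C=1-\min_{e\in G'}\log p(e)$.
\end{itemize}
The shift works because every perfect matching has exactly $n$ edges, so adding $C$ to each weight changes every perfect matching's total by the same $nC$. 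We also keep the removed edges out of the graph, or equivalently give them a prohibitively negative weight, so that they are never chosen. Running the Hungarian algorithm on this instance, in $O(n^3)$ time, returns a perfect matching of $G'$ that maximizes the product, and this is an optimal line-up.

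The main obstacle is only technical: computing exact logarithms of rational probabilities in the bit model. I would sidestep it by noting that the Hungarian algorithm only needs comparisons of sums of weights. Comparing $\sum\log p$ values amounts to comparing products of the input rationals, which can be done exactly in polynomial bit-length. Alternatively, a multiplicative version of the algorithm can operate directly on the products. Correctness of the reduction itself is immediate from the monotonicity of $\log$ and from the fact that all perfect matchings have the same cardinality.
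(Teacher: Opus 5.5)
Your proposal is correct and follows essentially the same route as the paper: drop the zero-probability edges, give each remaining edge weight $\log p_e$, and compute a maximum weight perfect matching with the Hungarian algorithm in $O(n^3)$ time. Your additional details fill in points the paper's proof leaves implicit: the case where the pruned graph has no perfect matching, the shift that makes the nonpositive log-weights nonnegative, and exact comparisons in the bit model.
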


		 	\section{Approximation Algorithm for Team Order}\label{sec:aprrox}
		
		\gregRN{As we have seen, in several cases finding a solution to \textsc{Team Order} is tractable. However, even though it resembles \textsc{Maximum Weight Matching}, its exact solutions are far more nuanced, which suggests its hardness.} In this section, we \gregRN{address the practical solvability of our problem by providing} a PTAS for \textsc{Team Order}.  Assuming  the input probabilities are bounded away from $0$ and $1$ by any arbitrary constant $\varepsilon > 0$, the PTAS computes a solution to \textsc{Team Order} whose winning probability is at most $\varepsilon$ less than that of  the optimal solution.
		
	\subsection{High-level Ideas}
		% 		\ali{Her I give a brief description about the algorithm and the idea, what do you think. }
		For any perfect matching $M=\{e_1,\ldots,e_n\}$ of $G=(T_1\cup T_2,E,p)$, let $X_M$ be a random variable counting the number of matches won by $T_1$. One may observe that $X_M$ follows a Poisson binomial distribution $PB(p_{e_1},\ldots,p_{e_n})$. 
		%Our proposed algorithm solves the following optimisation problem;
		Furthermore, \textsc{Team Order} can be written as the following optimization problem.
		\begin{equation*}
		\begin{aligned}
		%\max \quad &\Pr{X_M\ge \lfloor n/2\rfloor+1} = \sum_{\stackrel{S\subset\{1,..,n\}}{|S|\ge \lfloor n/2\rfloor+1}}\prod_{i\in S} p_{i,M(i)}\prod_{i\notin S}(1-p_{i,M(i)})\\
		\min_M \quad &\Pr{X_M\le \lfloor \frac{n}{2}\rfloor} \\%=1- \sum_{\stackrel{S\subset\{1,..,n\}}{|S|\ge \lfloor n/2\rfloor+1}}\prod_{i\in S} p_{i,M(i)}\prod_{i\notin S}(1-p_{i,M(i)}) \\
		\text{subject to:} \quad &  M ~ \text{is a perfect matching of } 
		G=(T_1\cup T_2, E, p) 
		\end{aligned}
		\end{equation*}
		% where   $p_{i,M(i)}$ is the winning probability player $i\in T_1$  against its pair $M(i)\in T_2$ in matching $M$.
		% jiarui: notation already defined.
		%  	 Alternatively, we can write the objective as:
		%  	   	\begin{equation*}
		%  	 \begin{aligned}
		%  	 \min \quad &\Pr{X_M\le \lfloor n/2\rfloor} =1- \sum_{\stackrel{S\subset\{1,..,n\}}{|S|\ge \lfloor n/2\rfloor+1}}\prod_{i\in S} p_{i,M(i)}\prod_{i\notin S}(1-p_{i,M(i)}) %\\
		%  	 %\text{subject to:} \quad &  M ~ \text{is a matching of } G=(T_1\cup T_2, E, p)
		%  	 \end{aligned}
		%  	 \end{equation*}  
		
		The main idea of our algorithm is as follows.
		First, we note that the number of matchings $M$ with $\Var{X_M}<\varepsilon^{-2}$ is bounded from above by a polynomial in $n$, when $\varepsilon$ is a constant. Hence, we can search over all such matchings to find out the optimal one among them.
		For the other matchings $M$ with a high variance $\Var{X_M} \ge \varepsilon^{-2}$, we use $\Phi\left(\frac{\lfloor \frac{n}{2} \rfloor-\Ex{X_M}}{\sqrt{\Var{X_M}}}\right)$ to approximate the objective function, where $\Phi(x)=(\frac{1}{\sqrt{2\pi}})\int_{-\infty}^x\mathrm{e}^{\frac{-y^2}{2}}dy$. Since $X_M$ is a Poisson binomial random variable, it holds that if $\Var{X_M}\ge \varepsilon^{-2}$, then
	\gregR{	\[
		\left | \Pr{X_M\le \lfloor \frac{n}{2} \rfloor} - \Phi\left(\frac{\lfloor \frac{n}{2} \rfloor-\Ex{X_M}}{\sqrt{\Var{X_M}}}\right) \right | \le \varepsilon.
		\]}
		Using the fact that $\Phi(x)$ is an increasing and continuous function in $x$, we get the following optimization problem as an approximation to the original one.
		\begin{equation*}
		\begin{aligned}
		\min_M \quad &\frac{\lfloor \frac{n}{2} \rfloor-\Ex{X_M}}{\sqrt{\Var{X_M}}} \\
		\text{subject to:} \quad &  M ~ \text{is a perfect matching of } 
		G=(T_1\cup T_2, E, p)
		\end{aligned}
		\end{equation*} 
		The objective function is still non-linear though, but it can be characterized by the mean and variance of $X_M$.
		Using the fact that for every matching $M$ we have $0\le \Var{X_M}\le \frac{n}{4}$ and $\Ex{X_M}\le n$, we can discretize the two dimensional space $\{(x, y): 0 \le x \le \frac{n}{4} \text{ and } 0 \le y \le n \}$ and design a search mechanism to eventually hit a matching that is close enough to the optimal matching.
		% (see Figure~\ref{fig:2}) \jiarui{I feel that Figure~\ref{fig:2} is not very informative... I suggest we remove it.}. 
		The search mechanism is based on an approximation algorithm solving a matching problem that involves both budget and rewards, which we will discuss next.

		\subsection{Preliminary Results}
		%An optimal line-up, denoted by $O$,  is a matching that minimises the losing probability or equivalently maximises  the winning probability  of $T_1$ against $T_2$. Recall that for any perfect matching $M=\{e_1,\ldots,e_n\}$ of $G$, $X_M$ is a random variable counting the number of matches won by $T_1$. One may observe that $X_M$ has Poisson binomial $PB(p_{e_1},\ldots,p_{e_n})$.  
		
		We introduce necessary preliminary results for designing the PTAS.
		It has two main ingredients. We apply a normal distribution estimation for a Poisson binomial  distribution, and an approximation algorithm for the following \textsc{Budgeted/Reward Matching} problem.
		{We assume that  every $p_e\notin\{0,1\}$ is bounded away from $0$ and $1$. 
		Define
		\gregR{$\delta = \min_{e\in E, p_e\notin \{0,1\}}\min\{p_e, 1-p_e\}. $}
		Then, we get that $\frac{1}{\delta}=\Theta(1)$. }	
  
		\paragraph{Approximation of Poisson Binomial Distribution.}
		We use a normal distribution estimation for a Poisson binomial  distribution to approximate $\Pr{X_M\le \lfloor \frac{n}{2} \rfloor}$, which is based on the following result. 	
  
		\begin{theorem}[\textnormal{\cite[Theorem 3.5]{poisson2019}}] \label{thm:approx2}
			Suppose that $X\sim \textit{PB}(p_1,\ldots,p_n)$ is a Poisson binomial random variable. Then, for every $1\le k\le n$,
			\gregR{\[\left|\Pr{X\le k}-\Phi\left(\frac{k-\Ex{X}}{\sqrt{\Var{X}}}\right)\right|\le \frac{1}{\sqrt{\Var{X}}}, \]
			 	where $\Phi(x)=(\frac{1}{\sqrt{2\pi}})\int_{-\infty}^x\mathrm{e}^{\frac{-y^2}{2}}dy$}.
		\end{theorem}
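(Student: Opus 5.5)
The plan is to derive this from the classical Berry--Esseen theorem for sums of independent but not identically distributed random variables. Write $X=\sum_{i=1}^n Y_i$ with $Y_i\sim\mathrm{Bernoulli}(p_i)$ independent. Set $\mu=\Ex{X}=\sum_i p_i$ and $\sigma^2=\Var{X}=\sum_i p_i(1-p_i)$.

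First, the degenerate cases. If $\sigma=0$ the right-hand side is interpreted as $+\infty$ and there is nothing to prove. If $0<\sigma\le 1$ the bound $1/\sigma\ge 1$ is trivial, since both $\Pr{X\le k}$ and $\Phi(\cdot)$ lie in $[0,1]$. So I may assume $\sigma>1$, although the argument below does not actually need this.

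Next, I apply the non-identically-distributed Berry--Esseen inequality:
\[
\sup_{x\in\mathbb{R}}\left|\Pr{\frac{X-\mu}{\sigma}\le x}-\Phi(x)\right|\le C_0\,\frac{\sum_{i=1}^n \Ex{|Y_i-p_i|^3}}{\sigma^3},
\]
with an absolute constant $C_0$. Shevtsova's bound gives $C_0\le 0.5600$. Taking $x=(k-\mu)/\sigma$ turns the left-hand side into exactly the quantity in the statement, for every $k$.

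The only computation is the third absolute central moment of a Bernoulli variable:
\[
\Ex{|Y_i-p_i|^3}=p_i(1-p_i)^3+(1-p_i)p_i^3=p_i(1-p_i)\bigl(p_i^2+(1-p_i)^2\bigr)\le p_i(1-p_i),
\]
using $p_i^2+(1-p_i)^2\le 1$. Summing over $i$ gives $\sum_i \Ex{|Y_i-p_i|^3}\le \sigma^2$. Hence the Berry--Esseen bound is at most $C_0\sigma^2/\sigma^3=C_0/\sigma\le 1/\sigma$.

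The main obstacle is purely the constant. The argument needs a version of Berry--Esseen for non-identical summands whose constant is at most $1$, and any of the modern explicit constants (Shevtsova, Tyurin, or even Esseen's original $\approx 7.59$ improved by van Beek to $0.7975$) suffices. If one wanted to avoid citing a sharp constant, the fallback would be to settle for a bound $C/\sigma$ with a larger absolute constant $C$. That would only change constants in the PTAS analysis, where the threshold $\Var{X_M}\ge\varepsilon^{-2}$ would be replaced by $\Var{X_M}\ge C^2\varepsilon^{-2}$.
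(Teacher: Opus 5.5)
Your argument is correct. The paper gives no proof of its own: it imports Theorem 3.5 of the cited Poisson binomial survey, which is the Berry--Esseen inequality for non-identically distributed summands, specialized to Bernoulli variables. Your derivation is the standard one behind that citation: you bound the Lyapunov ratio by $1/\sqrt{\Var{X}}$ using $\Ex{|Y_i-p_i|^3}\le p_i(1-p_i)$, and then apply an explicit constant $C_0\le 0.56<1$.
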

  
		An immediate application of Theorem \ref{thm:approx2} results in to the following corollary.
		\begin{corollary}\label{coor:1}
			Suppose that $X_M\sim PB(p_{e_1},\ldots,p_{e_n})$ is a Poisson binomial random variable corresponding to a matching $M=\{e_1,\ldots,e_n\}$ with     $\Var{X_M}\ge \varepsilon^{-2}$, for some $\varepsilon>0$. Then, 
			%  		\[
			%  	\Phi\left(\frac{\lfloor n/2\rfloor-\Ex{X_M}}{\sqrt{\Var{X_M}}}\right)-\varepsilon\le 	\Pr{X_M\le  \lfloor n/2\rfloor}\le \Phi\left(\frac{\lfloor n/2\rfloor-\Ex{X_M}}{\sqrt{\Var{X_M}}}\right)+\varepsilon,
			%  		\]
			\[
			\left | \Pr{X_M\le \lfloor \frac{n}{2} \rfloor} - \Phi\left(\frac{\lfloor \frac{n}{2} \rfloor-\Ex{X_M}}{\sqrt{\Var{X_M}}}\right) \right | \le \varepsilon,
			\]
			where $\Phi(x)=(\frac{1}{\sqrt{2\pi}})\int_{-\infty}^x\mathrm{e}^{\frac{-y^2}{2}}dy$.	
		\end{corollary}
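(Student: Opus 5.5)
The plan is to apply Theorem~\ref{thm:approx2} directly with $X = X_M$ and $k = \lfloor \frac{n}{2} \rfloor$, and then convert the variance hypothesis into the error bound $\varepsilon$.

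First, I will check that $X_M$ satisfies the hypotheses of Theorem~\ref{thm:approx2}. Since $M=\{e_1,\ldots,e_n\}$ is a perfect matching and the matches are assumed independent, $X_M$ is a sum of independent Bernoulli variables with parameters $p_{e_1},\ldots,p_{e_n}$. Hence $X_M\sim PB(p_{e_1},\ldots,p_{e_n})$.

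Second, I will verify that $k=\lfloor \frac{n}{2}\rfloor$ lies in the admissible range $1\le k\le n$.
\begin{itemize}
\item For $n\ge 2$ this is immediate.
\item For $n=1$ the variance is at most $\frac14$, so the hypothesis $\Var{X_M}\ge\varepsilon^{-2}$ would force $\varepsilon\ge 2$. In that case the claimed bound holds trivially, since both $\Pr{X_M\le \lfloor \frac{n}{2} \rfloor}$ and $\Phi(\cdot)$ lie in $[0,1]$ and so their difference is at most $1\le\varepsilon$.
\end{itemize}
The same trivial argument covers $k=0$ whenever it arises, should the cited theorem genuinely require $k\ge 1$. Here I am taking $\varepsilon>0$, as in the statement. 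Note also that $\Var{X_M}\ge\varepsilon^{-2}>0$, so the normalisation inside $\Phi$ is well defined.

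Third, applying Theorem~\ref{thm:approx2} gives
\[
\left|\Pr{X_M\le \lfloor \tfrac{n}{2}\rfloor}-\Phi\left(\frac{\lfloor \frac{n}{2}\rfloor-\Ex{X_M}}{\sqrt{\Var{X_M}}}\right)\right|\le \frac{1}{\sqrt{\Var{X_M}}}.
\]
From $\Var{X_M}\ge \varepsilon^{-2}$ we get $\sqrt{\Var{X_M}}\ge \varepsilon^{-1}$, and therefore $1/\sqrt{\Var{X_M}}\le\varepsilon$, which is the claimed bound.

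No step here is a real obstacle. The only point that needs care is the boundary case on $k$, and it is handled by the trivial bound described above.
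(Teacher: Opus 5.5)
Your proposal is correct and matches the paper, which treats this corollary as an immediate application of Theorem~\ref{thm:approx2} with $k=\lfloor \frac{n}{2}\rfloor$, using $\Var{X_M}\ge\varepsilon^{-2}$ to bound the error $1/\sqrt{\Var{X_M}}$ by $\varepsilon$. Your separate handling of the $n=1$ (i.e., $k=0$) boundary case is a valid extra detail that the paper does not mention.
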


		\paragraph{Budgeted Matching.}
		%The approximation  algorithm has two main ingredients. We apply  an approximation algorithm for the following \textsc{Budgeted/Reward Matching} problem, and a normal distribution estimation for a Poisson binomial  distribution.
		We will use approximation algorithms for the following \textsc{Budgeted Matching} problem as subroutines in our algorithm.
		
		\pbDef{\textsc{Budgeted Matching}}
		{A bipartite graph $G$, weight $w(e)$ and cost $c(e)$ for each edge $e$, and a budget $B$.}
		{Compute a perfect matching $M$ of $G$ that maximizes $w(M) :=\sum_{e\in M}w(e)$, subject to $c(M) :=\sum_{e\in M} c(e)\le B$.}
		Specifically, we are interested in the following weight and cost functions.
		For every $e\in E$, we let 
		$w(e)=p_e$  and $c(e)=p_e \cdot (1-p_e).$ 
		Hence, for every  matching $M$, we have 
		\begin{equation*}
		w(M) %=\sum_{e\in M}w(e)
		%=\sum_{e\in M}p_e
		=\Ex{X_M},
		\quad\text{and }\quad
		%  	\end{equation*}
		%  	and 
		%  	\begin{equation*}
		c(M) %=\sum_{e\in M}c(e)
		%=\sum_{e\in M}p_e \cdot (1-p_e)
		=\Var{X_M}.
		\end{equation*}
		We will henceforth stick to the above weight and cost functions, unless otherwise clarified.
		%
		%	\pbDef{\textsc{Reward Matching}}
		%	{A bipartite graph $G$, weight $w(e)$ and reward $c(e)$ for each edge $e$, and a threshold $R$.}
		%  	{Compute a perfect matching $M$ of $G$ that maximizes $w(M) :=\sum_{e\in M}w(e)$ while $c(M)=\sum_{e\in M} c(e)\ge R$.}	
		%
		%
		We use $I_b(G,w,c,B)$ to denote an instance of \textsc{Budgeted Matching}.
		%and $I_r(G,w,c,R)$ to denote instances of the above two problems, respectively.
		For convenience, we can also define a ``rewarded'' variant of \textsc{Budgeted Matching}, where we want the total cost to pass a threshold $R$, i.e., $c(M)=\sum_{e\in M} c(e)\ge R$, and we denote it by $I_r(G,w,c,B)$.
		{Since $0\le c(e)< 1$, we observe  that} 
		$I_r(G,w,c,B)$ is equivalent to $I_b(G, w, c', n-R)$, where $c'(e) = 1 - c(e)$ for every $e \in E$.
		Berger et al. \cite{Berger11}  designed a PTAS for the \textsc{Budgeted Matching} problem. 
		Using the same idea this PTAS is based on, we get the following. 
		
		\begin{lemma}\label{lem:budget}\label{lem:reward}
			Suppose that $G = (T_1 \cup T_2, E, p)$ is a \textsc{Team Order} instance, $w(e)=p_e$ and $c(e)=p_e \cdot (1-p_e)$ for each $e \in E$.
			Then, there is a polynomial-time algorithm to compute a feasible solution $M$ to $I_{b}(G,w,c,B)$ (respectively, $I_{r}(G,w,c,R)$) such that $w(M)\ge opt-2$, where $opt$ is the weight of optimal solution of $I_{b}(G,w,c,B)$ (respectively, $I_{r}(G,w,c,R)$). 
		\end{lemma}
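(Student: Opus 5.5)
We would follow the Lagrangian relaxation approach of Berger et al.~\cite{Berger11}, specialised to our weights. The key observation is that here $w(e)=p_e\in[0,1]$, so no single edge carries more than unit weight. Their PTAS needs to guess and fix the heaviest edges to control the error. We can skip that guessing step, at the cost of an additive loss of two units of weight.

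\textbf{Lagrangian relaxation.} For a multiplier $\lambda\ge 0$, define $w_\lambda(e)=w(e)-\lambda c(e)$. Let $M_\lambda$ be a maximum $w_\lambda$-weight perfect matching, computed by the Hungarian algorithm. Then for every $\lambda$,
\[
w_\lambda(M_\lambda)+\lambda B\ \ge\ opt ,
\]
because the optimal feasible matching $M^\circ$ satisfies $c(M^\circ)\le B$. If $M_0$ already satisfies $c(M_0)\le B$, we return it and are done. Otherwise we use binary search (or parametric search over the polynomially many breakpoints) to find a critical value $\lambda^*$. At $\lambda^*$ there are two matchings $M_1,M_2$, both optimal for $w_{\lambda^*}$, with $c(M_1)\le B\le c(M_2)$. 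Their common value satisfies
\[
w_{\lambda^*}(M_1)=w_{\lambda^*}(M_2)\ \ge\ opt-\lambda^*B .
\]

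\textbf{Patching (the main obstacle).} Consider the symmetric difference $M_1\triangle M_2$, which is a disjoint union of alternating cycles. Following Berger et al., we swap the cycles of $M_1$ for those of $M_2$ one at a time. Each swap keeps the matching perfect and $w_{\lambda^*}$-optimal, since every cycle has zero $w_{\lambda^*}$-gain by optimality of both. We stop at the last swap before the cost exceeds $B$. This gives two consecutive perfect matchings $M',M''$ differing in a single alternating cycle $C$, with $c(M')\le B<c(M'')$.

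For the final step, Berger et al.\ apply a further gain-swapping argument on the cycle $C$. It produces a feasible \emph{matching} (a path-exchange restricted to part of the cycle) that loses at most the weight of two edges compared with $w_{\lambda^*}(M')+\lambda^* B$. Since our weights satisfy $w(e)\le 1$, the loss is at most $2$. The result is then
\[
w(M)\ \ge\ w_{\lambda^*}(M)+\lambda^* c(M)\ \ge\ opt-2 .
\]

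The delicate point is \emph{perfectness}. A path exchange yields a matching that is near-perfect, not perfect. There are two ways to handle this:
\begin{itemize}
\item[(a)] Note that Berger et al.\ work with perfect matchings via the same cycle structure and reuse their argument verbatim.
\item[(b)] Complete the matching arbitrarily. The graph $G$ is complete bipartite, so at most two vertices are unmatched and a single edge completes the matching. We must check that this does not break the budget, which is where $c(e)\le 1/4$ helps.
\end{itemize}
We would try (a) first and fall back on (b), absorbing any slack into the constant $2$.

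\textbf{The rewarded variant.} For $I_r(G,w,c,R)$ we use the stated equivalence with $I_b(G,w,c',n-R)$, where $c'(e)=1-c(e)$. This holds because $c'(M)=n-c(M)$ for a perfect matching. The argument above never used any property of $c$ beyond nonnegativity, so it applies with $c'\ge 0$ and gives the same $opt-2$ guarantee. All steps (Hungarian algorithm, search over $\lambda$, cycle swaps) take polynomial time.
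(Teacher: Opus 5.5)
You take the same route as the paper. The paper's proof simply invokes Lemma~4 of Berger et al.\ (a feasible solution with $w(M)\ge opt-2w_{\max}$) and notes $w_{\max}\le 1$; the rewarded case is covered by the reduction to $I_b(G,w,c',n-R)$ with $c'=1-c$ stated before the lemma. You are unpacking the Lagrangian/gasoline argument behind that citation. However, the perfectness point you raise at the end is a genuine gap, and neither of your fixes closes it.

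Option (a) is not available. Berger et al.\ optimise over arbitrary, not necessarily perfect, matchings. Their last step exchanges only a sub-path of the alternating cycle, which leaves two exposed vertices, one on each side. An even cycle has exactly two perfect matchings, so the only perfect matchings inside $M'\cup M''$ are $M'$ and $M''$ themselves. Any other perfect output must therefore use an edge outside $M'\cup M''$, which is your option (b).

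Option (b) fails on feasibility. The near-perfect matching produced by the exchange can have cost arbitrarily close to $B$: the exchange is chosen to push the cost up to the budget, and dropped boundary edges may have cost $0$. The completing edge $g$, on the other hand, may cost up to $\tfrac14$, and nothing in the argument makes $g$ cheap, so $c(M)\le B$ can fail. The bound $c(e)\le\tfrac14$ limits the size of the violation but does not prevent it. A violated budget is a failure of feasibility and cannot be traded against the additive weight constant $2$.

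Running with budget $B-\tfrac14$ and then completing does not help either. The optimum can drop by far more than a constant, e.g.\ with edge probabilities in $\{\alpha,\beta\}$, $\alpha<\beta\le\tfrac12$, both close to $\tfrac12$, where $c$ is almost flat.

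The rewarded variant is affected too. The equivalence with $I_b(G,w,c',n-R)$ rests on $c'(M)=n-c(M)$, which holds only when $|M|=n$; a near-perfect $M$ with $c'(M)\le n-R$ does not certify $c(M)\ge R$.

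The paper's one-line proof has the same silent gap, since \textsc{Budgeted Matching} is defined there over perfect matchings while Berger et al.'s guarantee is not. What is missing in both is a patching step that stays within perfect matchings of the complete bipartite graph and keeps the cost constraint exact.
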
  

		\paragraph{Small / Large Variance Matchings.}
		We partition the set of edges into edges with fractional and binary weights; let $F=\{e\in E: p_e\notin\{0,1\}\} $ and $	\overline{F} =\{e\in E: p_e\in\{0,1\}\}.$
%		\[
%		F=\{e\in E: p_e\notin\{0,1\}\} 
%		\quad \text{ and } \quad 
%		\overline{F} =\{e\in E: p_e\in\{0,1\}\}.
%		\] 
		Fix an  arbitrary constant  $\varepsilon\in (0,1]$ and define
$\mathcal{M}^{+}(\varepsilon)= \left\{N\subset F:   \text{  $N$ is a minimum size matching with  $c(N)>\varepsilon^{-2}$} \right\}$,  
and
		\begin{align*}
		%&\mathcal{M}^{+}(\varepsilon)= \left\{N\subset F:   \text{  $N$ is a minimum size matching with  $c(N)>\varepsilon^{-2}$} \right\}.
	%	\\
		&\mathcal{M}^{-}(\varepsilon)= \left\{N\subset F:   \text{  $N$ is a  matching with  $c(N)\le \varepsilon^{-2}$} \right\}.
		\end{align*}

		Clearly, for every perfect matching $M$ on $G$, if $c(M)>\varepsilon^{-2}$, then there exists $N\in \mathcal{M}^+(\varepsilon)$ such that $M\cap N=N$. Similarly, if $c(M)\le \varepsilon^{-2}$, there exists $N\in \mathcal{M}^-(\varepsilon)$ such that $M\cap N=N$.
		
		For every matching $N \subset E$ and every subset of edges $E'\subseteq E$, let 
		%\ali{ $E'(\cdot)$ is  used for functions while $E'$ is a set. What about this $E'_N$?}
		%$E'(N)$ denote the set of all edges in $E'$  which do not share any endpoints with $N$.
		%
		$E'_N = \{ e \in E' : e \cap N = \emptyset \}$,
		i.e., $E'_N$ is the set of all edges in $E'$  that do not share \gregR{all} endpoints with $N$.
		We now define two families of bipartite graphs as follows. First,
		$\mathcal{G}^+(\varepsilon)=\{H=(T_1\cup T_2, N\cup E_N, p):
		N\in \mathcal{M}^{+}(\varepsilon) \}$. 
		Intuitively, we fix the matching $N$ and leave the unmatched part of the graph $G$ free.
		%
		%Then, let $\overline{F}=E-F$, which is the set of all the edges whose weights are either $0$ or $1$; 
%
		Then, we define $
		\mathcal{G}^-(\varepsilon)=\{H=(T_1\cup T_2, N\cup \overline{F}_N, p):
		N\in \mathcal{M}^{-}_\varepsilon \}.
		$
		This differs from $\mathcal{G}^+(\varepsilon)$, as we only consider 0/1-edges in the unmatched part of $G$.
		Note that for every perfect matching $M$ of $G$, if $c(M)>\varepsilon^{-2}$, there is $H\in \G^+(\varepsilon)$ such that $M\subset H$.
		Similarly, if $c(M)\le \varepsilon^{-2}$, then there is $H\in \G^-(\varepsilon)$ such that $M\subset H$.
		Next, we show that the size of these families of graphs is polynomially bounded.
		\begin{lemma}\label{obs:1}
			It holds that $|\mathcal{G}^+(\varepsilon)|\le n^{4\delta^{-1}\varepsilon^{-2}}$ and $|\mathcal{G}^-(\varepsilon)|\le n^{4\delta^{-1}\varepsilon^{-2}}$.
			%  	\[
			%  	|\mathcal{G}^+(\varepsilon)|\le n^{2(\varepsilon\delta_0)^{-2}},
			%  	\]
			%  	and 
			%  	\[
			%  	|\mathcal{G}^-(\varepsilon)|\le n^{2(\varepsilon\delta_0)^{-2}}.
			%  	\]
		\end{lemma}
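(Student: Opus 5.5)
The plan is to bound the number of matchings $N$ in $\mathcal{M}^+(\varepsilon)$ and in $\mathcal{M}^-(\varepsilon)$. Each graph $H$ in $\mathcal{G}^\pm(\varepsilon)$ is determined by its underlying matching $N$, since the remaining edges $E_N$ or $\overline{F}_N$ are determined by $N$. So it suffices to show that every such $N$ has at most $k:=2\delta^{-1}\varepsilon^{-2}$ edges, and then count matchings of size at most $k$.

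\emph{Size bound.} Every edge $e\in F$ satisfies $\delta\le p_e\le 1-\delta$. Hence
\[
c(e)=p_e(1-p_e)\ge \delta(1-\delta)\ge \delta/2,
\]
using $\delta\le 1/2$.

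For $N\in\mathcal{M}^-(\varepsilon)$ we have $\varepsilon^{-2}\ge c(N)\ge |N|\delta/2$. Therefore $|N|\le 2\delta^{-1}\varepsilon^{-2}=k$.

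For $N\in\mathcal{M}^+(\varepsilon)$, minimality of $N$ is what matters. Removing any single edge $e$ from $N$ gives a smaller matching in $F$, and by minimality it must satisfy $c(N\setminus\{e\})\le\varepsilon^{-2}$. By the argument just given, this yields $|N|-1\le k$. A slightly cleaner route is to note directly that $|N|\le \lceil 2\delta^{-1}\varepsilon^{-2}\rceil$, because any $\lceil\varepsilon^{-2}/(\delta/2)\rceil+1$ edges of $F$ already have cost exceeding $\varepsilon^{-2}$. Either way, $|N|\le k+1$.

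\emph{Counting.} A matching of size $s$ is a set of $s$ edges chosen from at most $n^2$ edges. So the number of matchings of size at most $s_0$ is at most
\[
\sum_{s\le s_0}\binom{n^2}{s}\le \sum_{s\le s_0} n^{2s}.
\]
This is at most $n^{2s_0+1}$ for $n\ge 2$, or cruder. A finer count uses the fact that a matching of size $s$ is an injection from an $s$-subset of $T_1$ into $T_2$, giving at most $n^{2s}$ choices. With $s_0\le k+1=2\delta^{-1}\varepsilon^{-2}+1$, the exponent is about $2s_0\le 4\delta^{-1}\varepsilon^{-2}+2$.

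\emph{Main obstacle.} The main obstacle is the bookkeeping of constants in the exponent, namely the $+1$ from minimality and the sum over sizes, so that it lands within the claimed $4\delta^{-1}\varepsilon^{-2}$.

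I would fix this using the slack in the cost bound. For $\delta<1/2$ we have $\delta(1-\delta)>\delta/2$ strictly, and in fact $c(e)\ge\delta(1-\delta)$. This gives $|N|\le \varepsilon^{-2}/(\delta(1-\delta))+1$. Since $\varepsilon\le 1$ and $\delta\le1/2$, we have $\delta^{-1}\varepsilon^{-2}\ge 2$. So the additive constants can be absorbed:
\[
2\left(\frac{\varepsilon^{-2}}{\delta(1-\delta)}+1\right)+1\le 4\delta^{-1}\varepsilon^{-2}
\]
whenever $\varepsilon^{-2}/(1-\delta)+3\delta/2\cdot\ldots$ fits. If this does not close exactly at the boundary $\delta=1/2$, I would count matchings of size exactly $s$ instead of summing. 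In that case the bound reads $\sum_{s\le s_0}n^{2s}\le 2n^{2s_0}$, and the lemma holds up to this harmless constant factor, which is all the PTAS needs: polynomial size for constant $\varepsilon,\delta$.
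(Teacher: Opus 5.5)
Your approach is the paper's. You lower-bound $c(e)=p_e(1-p_e)\ge\delta(1-\delta)\ge\delta/2$ on $F$, bound $|N|$, and count edge subsets of $F$ using $|F|\le n^2$. You are right, and more careful than the paper, that for $\mathcal{M}^+(\varepsilon)$ minimality only gives $|N|\le 2\delta^{-1}\varepsilon^{-2}+1$; the paper simply asserts $|N|\le 2\delta^{-1}\varepsilon^{-2}$ at this step. The gap is in how you then try to absorb the $+1$:
\begin{itemize}
\item Your ``cleaner route'' $|N|\le\lceil 2\delta^{-1}\varepsilon^{-2}\rceil$ is false whenever $2\delta^{-1}\varepsilon^{-2}$ is an integer. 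Your own reasoning only yields $\lceil\cdot\rceil+1$.
\item The absorption inequality is left unfinished. It also fails at $\delta=\tfrac12$, $\varepsilon=1$, where the left side is $11$ and the right side is $8$.
\item The fallback is not a ``harmless constant factor''. With $s_0=k+1$ you get roughly $n^{2k+2}$, a factor $\Theta(n^2)$ above the target $n^{2k}=n^{4\delta^{-1}\varepsilon^{-2}}$.
\end{itemize}

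No bookkeeping can repair this, because the $\mathcal{G}^+$ half of the statement is false at the boundary. Take every $p_e=\tfrac12$, so $F=E$, $\delta=\tfrac12$ and $c(e)=\tfrac14$, and take $\varepsilon=1$. Four edges have cost exactly $1$, which is not $>1$. So $\mathcal{M}^+(1)$ is the set of all $5$-edge matchings of $K_{n,n}$, under either reading of ``minimum size''. For $n\ge 7$, distinct $N$ give distinct $H$, since $N$ is exactly the set of edges of $H$ whose endpoints have degree one. Hence $|\mathcal{G}^+(1)|=5!\binom{n}{5}^2\sim n^{10}/120$, which exceeds $n^{4\delta^{-1}\varepsilon^{-2}}=n^{8}$ for large $n$. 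The same happens whenever $\varepsilon^{-2}$ is an integer.

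What you can prove, and should state, is $|\mathcal{G}^+(\varepsilon)|\le n^{4\delta^{-1}\varepsilon^{-2}+2}$. This follows from $|N|\le\lfloor 2\delta^{-1}\varepsilon^{-2}\rfloor+1$ together with $\binom{n^2}{s}\le n^{2s}/s!$. It is still $n^{O(\delta^{-1}\varepsilon^{-2})}$, which is all Theorem~\ref{thm:alg2} uses.

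For $\mathcal{G}^-$ your bound $|N|\le 2\delta^{-1}\varepsilon^{-2}$ is correct, and the stated bound does hold. However, your crude sum only gives $n^{2s_0+1}$ or $2n^{2s_0}$. To reach exactly $n^{4\delta^{-1}\varepsilon^{-2}}$ you need the $1/s!$ factor in $\binom{n^2}{s}\le n^{2s}/s!$, which the paper's count uses implicitly.
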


		\subsection{The Algorithm}

			\begin{algorithm}[t]
		\caption{{\sc $\varepsilon$-Approximation Algorithm} }\label{alg2}
			%\begin{algorithmic}[1]
			
			\BlankLine
			\KwIn{a \textsc{Team Order} instance $G =(T_1\cup T_2, E, p)$.}
			\KwOut{an $\varepsilon$-approximate solution to \textsc{Team Order}.}
			
			\BlankLine
			
			$\varepsilon \leftarrow \frac{\varepsilon}{4}$;
			
			\For{ $H \in \mathcal{G}^-(\varepsilon)$ \label{ln:for-1-start}}{
				
				%{Compute a maximum weight perfect matching of $H$ (if exists), say $M_H$;}
				$M^* \leftarrow$ a maximum weight perfect matching of $H$; 
				%\jiarui{I added ``weight'' here (and in proof of Thm~\ref{thm:alg2}). I suppose that's what you mean? }\ali{ yes, you are right.   by definition, matching of  $H$ has $N\in \M^{-}_\varepsilon$ plus some edges from $\bar{F}(N)$	whose weighrts are zero or one.}
				
				%If $M_H$ has higher winning probability than the previous one, keep it as $M^{(1)}_\varepsilon$;		
				
				\If{$M^*$ exists and has a higher winning probability than $M^{(1)}_\varepsilon$ (or $M^{(1)}_\varepsilon = null$)}
				{
					$M^{(1)}_\varepsilon \leftarrow M^*$; \label{ln:for-1-end}
				} 
				
			} 
			
			%{$val \leftarrow +\infty$;}
			
			{$x_i \leftarrow \varepsilon^{-2}+ \frac{i}{n}$ for each $i=0,\ldots, \frac{n^2}{4}$;}
			
			\For{$H \in \mathcal{G}^+(\varepsilon)$ \label{ln:for-2-start}}{
				
				\For{ $i=1,\ldots,  \frac{n^2}{4}$}{
					$M_i^* \leftarrow $ a solution to $I_b(H,w,c,x_i)$ such that $w(M_i^*) > opt - 2$; \label{ln:Mi-st-1}
					\tcp{Lemma \ref{lem:budget}}
					
					\If{ $M_i^*$ exists and $w(M_i)<\lfloor \frac{n}{2} \rfloor$}{
						{ $M_i^* \leftarrow $ a solution to $I_r(H,w,c,x_{i-1})$ such that $w(M_i^*) > opt - 2$;} \label{ln:Mi-st-2}
						\tcp{Lemma \ref{lem:reward}}
					}
					
					%  		\If{$M_i$ exists and $val>\frac{\lfloor n/2\rfloor -w(M_i)}{\sqrt{c(M_i)}}$} {
					
					\If{$M_i^*$ exists and $\frac{\lfloor \frac{n}{2}\rfloor -w(M^{(2)}_\varepsilon)}{\sqrt{c(M^{(2)}_\varepsilon)}} > \frac{\lfloor \frac{n}{2} \rfloor -w(M_i^*)}{\sqrt{c(M_i^*)}}$ (or $M^{(2)}_\varepsilon = null$)} {
						
						%$val \leftarrow \frac{\lfloor n/2\rfloor -w({{M_{i}})}}{\sqrt{c(M_{i})}}$;
						
						{$M^{(2)}_\varepsilon \leftarrow M_i^*$;} %\Comment{lines 14-17 searches for a matchings with min. val }
						\label{ln:for-2-end}
					}
				}
			}
			\Return{ $M^{(1)}_\varepsilon$ or $M^{(2)}_\varepsilon$ whichever has the higher winning probability.} 
			%\end{algorithmic}
		\end{algorithm}
		
		Now we discuss our approximation algorithm, Algorithm~\ref{alg2}.
		Theorem~\ref{thm:alg2} shows that the algorithm produces an $\varepsilon$-approximate solution to \textsc{Team Order} in polynomial time.
\harisnew{The proof relies on  Lemma \ref{lem:budget} and Lemma~\ref{obs:1}. }

		\begin{theorem}
			\label{thm:alg2}
			Algorithm~\ref{alg2} computes an $\varepsilon$-approximate solution to \textsc{Team Order} and runs in time $n ^{O(\delta^{-1}\varepsilon^{-2})}$, where $\delta=\min_{ p_e\notin\{0,1\}}\min_{e\in E}\{p_e, 1-p_e\}$.  
			%\jiarui{I changed $3 \varepsilon$ to $\varepsilon$, and also changed Algorithm~\ref{alg2} accordingly.}
		\end{theorem}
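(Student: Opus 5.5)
We split according to the variance of an optimal line-up $O$ (a perfect matching of $G$) and show that the corresponding loop of Algorithm~\ref{alg2} returns a matching whose losing probability $\Pr{X_M\le \lfloor \frac{n}{2}\rfloor}$ is at most that of $O$ plus $\varepsilon$. Throughout, $\varepsilon$ denotes the rescaled value $\varepsilon/4$ used inside the algorithm, so we may afford a loss of $4\varepsilon$.

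\emph{Running time.} Lemma~\ref{obs:1} bounds both $|\mathcal{G}^-(\varepsilon)|$ and $|\mathcal{G}^+(\varepsilon)|$ by $n^{4\delta^{-1}\varepsilon^{-2}}$. The families can be enumerated by listing all matchings in $F$ with at most $\delta^{-1}\varepsilon^{-2}+1$ edges. This suffices because every edge of $F$ has $c(e)\ge \delta(1-\delta)\ge \delta/2$, so a minimum matching exceeding cost $\varepsilon^{-2}$, or a matching of cost at most $\varepsilon^{-2}$, has $O(\delta^{-1}\varepsilon^{-2})$ edges. For each graph $H$, the algorithm performs one Hungarian computation, $O(n^2)$ calls to the polynomial routine of Lemma~\ref{lem:budget}, and $O(n^2)$-time winning-probability evaluations by Proposition~\ref{pro:winning}. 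The total is therefore $n^{O(\delta^{-1}\varepsilon^{-2})}$.

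\emph{Case 1: $c(O)\le\varepsilon^{-2}$.} Let $N=O\cap F$. Then $N\in\mathcal{M}^-(\varepsilon)$, and $O\subseteq H=(T_1\cup T_2,N\cup\overline{F}_N,p)\in\mathcal{G}^-(\varepsilon)$. Every perfect matching of $H$ contains $N$ (those edges are forced in $H$, since the only other edges are the $0/1$ edges $\overline F_N$ avoiding $N$). Its remaining edges are deterministic, and the number of wins among them is its weight minus $w(N)$. A maximum weight perfect matching $M^*$ of $H$ therefore equals $N$ plus at least as many sure wins as $O$ has. Hence $X_{M^*}$ stochastically dominates $X_O$, and $M^{(1)}_\varepsilon$ is exactly optimal.

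\emph{Case 2: $c(O)>\varepsilon^{-2}$.} Pick $N\in\mathcal{M}^+(\varepsilon)$ with $N\subseteq O$; then $O\subseteq H\in\mathcal{G}^+(\varepsilon)$. Let $i$ be such that $x_{i-1}< c(O)\le x_i$; the grid of step $1/n$ covers $[\varepsilon^{-2},n/4]$.
\begin{itemize}
\item If the budgeted solution $M_i^*$ has $w(M_i^*)\ge\lfloor n/2\rfloor$, then since $c(M_i^*)\le x_i\le c(O)+1/n$, the ratio $(\lfloor n/2\rfloor-w)/\sqrt c$ is nonpositive.
\item Otherwise the rewarded solution satisfies $c(M_i^*)\ge x_{i-1}\ge c(O)-1/n$, and $w(M_i^*)\ge w(O)-2$ by Lemma~\ref{lem:reward} (optimality of $O$ as a feasible point).
\end{itemize}
In either sub-case, compare $z(M)=(\lfloor n/2\rfloor-w(M))/\sqrt{c(M)}$ with $z(O)$. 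The loss of $2$ in the mean and of $1/n$ in the variance shift $z$ by at most $2/\sqrt{c}+O(1/(n\sqrt c))\le 2\varepsilon+o(1)$, because $c\ge\varepsilon^{-2}$. We also need $\Phi$ to be $1/\sqrt{2\pi}$-Lipschitz, so that $|\Phi(z(M))-\Phi(z(O))|\le\varepsilon$ or so. The sign cases (nonpositive numerator, where a smaller variance with a larger mean still helps) must be checked separately, using monotonicity of $z$ in $w$ and in $c$ on each side of zero.

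Since $M^{(2)}_\varepsilon$ minimizes $z$ over all candidates, $z(M^{(2)}_\varepsilon)\le z(O)+O(\varepsilon)$. Also $c(M^{(2)}_\varepsilon)\ge\varepsilon^{-2}$, because it contains some $N\in\mathcal{M}^+$. Corollary~\ref{coor:1}, applied to both $M^{(2)}_\varepsilon$ and $O$, converts the $\Phi$-values to losing probabilities with error $\varepsilon$ each. This gives a total error of at most $4\varepsilon$, i.e.\ the original $\varepsilon$.

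The main obstacle is this last perturbation step. We must show that an additive error of $2$ in $\Ex{X_M}$ and a grid error in $\Var{X_M}$ change $\Phi(z)$ by only $O(\varepsilon)$, uniformly over signs of the numerator. We would handle it by writing $\Phi(z(M))-\Phi(z(O))$ as the sum of a mean-perturbation term, bounded by $2/(\sqrt{2\pi}\sqrt{c})$, and a variance-perturbation term. For the variance term, note that $\phi(z)|z|$ is bounded by a constant and the relative change of $\sqrt c$ is $O(1/(nc))$.
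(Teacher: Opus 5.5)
Your overall structure matches the paper's proof. The low-variance case is solved exactly through $\mathcal{G}^-(\varepsilon)$. The high-variance case uses the grid $x_i$, the budgeted and rewarded calls with additive loss $2$ from Lemma~\ref{lem:budget}, and Corollary~\ref{coor:1} applied to both $O$ and $M^{(2)}_\varepsilon$.

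The gap is in the rewarded sub-case when the returned matching $M$ has $w(M)>\lfloor n/2\rfloor$.
\begin{itemize}
\item There you only know $c(M)\ge x_{i-1}\ge c(O)-1/n$. This is a one-sided bound, so $c(M)$ can be as large as $n/4$.
\item With a negative numerator, a larger variance pushes $z(M)=(\lfloor n/2\rfloor-w(M))/\sqrt{c(M)}$ \emph{up} towards $0$.
\item So your claim that the variance moves by only $1/n$ (relative change of $\sqrt c$ of order $O(1/(nc))$) does not hold here. Nor does monotonicity of $z$ on each side of zero give $z(M)\le z(O)+O(\varepsilon)$.
\item From the constraints you list ($w(M)\ge w(O)-2$, $c(M)\ge c(O)-1/n$, $w(M)>\lfloor n/2\rfloor$), nothing prevents $w(O)$ from exceeding $\lfloor n/2\rfloor$ by much more than $2$. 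Then $z(O)$ would be very negative while $z(M)\approx 0$, and the $\Phi$-values would differ by far more than $O(\varepsilon)$.
\end{itemize}

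The missing step is to use why the rewarded call was made at all. The budgeted solution $M^b$ returned $w(M^b)<\lfloor n/2\rfloor$, and $O$ is feasible for $I_b(H,w,c,x_i)$, so $w(O)-2\le w(M^b)<\lfloor n/2\rfloor$. Hence $w(O)<\lfloor n/2\rfloor+2$, so $z(O)>-2/\sqrt{c(O)}>-2\varepsilon$ and $z(M)<0<z(O)+2\varepsilon$. This is exactly the paper's bound $w(O)\le\lfloor n/2\rfloor+2$, followed by its comparison of $\Phi(0)+\varepsilon$ with $\Phi(-2\varepsilon)-\varepsilon$.

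With this added, each of the three sign cases is handled by a one-sided variance bound pointing in the favourable direction, as in the paper:
\begin{itemize}
\item budgeted with $w\ge\lfloor n/2\rfloor$;
\item rewarded with $w\le\lfloor n/2\rfloor$;
\item rewarded with $w>\lfloor n/2\rfloor$.
\end{itemize}

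Two smaller slips, both harmless:
\begin{itemize}
\item The variance-induced shift of $z$ is $|z|\cdot O(1/(nc))$ with $|z|$ up to $n/(2\sqrt c)$. That is $O(c^{-3/2})=O(\varepsilon^3)$, not $O(1/(n\sqrt c))$; it is still negligible.
\item Since $c(e)\ge\delta/2$, matchings in $\mathcal{M}^{\pm}(\varepsilon)$ can have up to about $2\delta^{-1}\varepsilon^{-2}$ edges, not $\delta^{-1}\varepsilon^{-2}+1$. The enumeration bound should be adjusted, which leaves the exponent $O(\delta^{-1}\varepsilon^{-2})$ unchanged.
\end{itemize}
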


  {\section{Winning Probability of a Maximum Weight Matching}}
  	%\ali{I improved the write-up, removed some explanations, and added the preliminary results to an appendix}.
	In this section we investigate the winning probability of a maximum weight matching.
  	Our result provides a lower bound for the winning probability of any maximum weight matching compared with that of the optimal line-up. In particular, the result shows that  \gregR{a} sufficiently large/small  maximum weight matching \gregR{performs} almost as well as the optimal line-up. 
  In what follows, we view $G=(T_1\cup T_2, E, p)$ as a weighted bipartite graph where for every $e\in E$, $p_e$ is the weight of $e$.  For every matching $M$, we use $w(M)$ to denote its weight. In addition, we assume that the size of $G$ is sufficiently large.
	
  	\begin{theorem}\label{thm:main}
	  	
  	 	Let $M^*$ be a maximum weight matching and let $O$ be an optimal line-up. Then,
  	 	\begin{itemize}
  	 			\item[(1)] If $w(M^*)= \frac{n}{2}\pm f(n)\sqrt{n}$, where $f(n)\in [1, \frac{\sqrt{n}}{2}]$ is any non-decreasing  function in $n$, then 
  	 		$\Pr{T_1 \text{wins under $O$} }\le \Pr{T_1 \text{wins under $M^{*}$} }+ \mathrm{e}^{-2f^2(n)}$.
  	 		
  	 		\item[(2)] If $w(M^*) \in [\frac{n}{2}-\sqrt{n\log n},\ \frac{n}{2}+\sqrt{n\log n}]$, then
  	 		\begin{align*}
  	 	&	\Pr{T_1 \text{wins under $O$} } \\
\le&\Pr{T_1 \text{wins under $M^{*}$} }+ \frac{(4+o(1))}{n+1}\sum_{e\in M^*}(p_e-\frac{1}{2})^2.
  	 		\end{align*}
  	 	\end{itemize}	 	
  	 \end{theorem}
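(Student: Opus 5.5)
The plan is to compare both line-ups through their Poisson binomial counts $X_O$ and $X_{M^*}$, with the target $k=\lfloor \frac{n}{2}\rfloor+1$. The only structural fact needed is $\Ex{X_O}=w(O)\le w(M^*)=\Ex{X_{M^*}}$, which holds because $M^*$ is a maximum weight matching. Each part then follows from a concentration or anti-concentration argument for sums of independent Bernoulli variables.

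\emph{Part (1).} I split into two cases according to the sign in $w(M^*)=\frac{n}{2}\pm f(n)\sqrt{n}$.
\begin{itemize}
\item If $w(M^*)\le \frac{n}{2}-f(n)\sqrt{n}$, then also $\Ex{X_O}\le \frac{n}{2}-f(n)\sqrt{n}$. Hoeffding's inequality for independent $[0,1]$ variables gives $\Pr{X_O\ge k}\le \Pr{X_O-\Ex{X_O}\ge f(n)\sqrt n}\le \mathrm{e}^{-2f^2(n)}$. The claim follows because the right-hand side of (1) is at least $\mathrm{e}^{-2f^2(n)}$.
\item If $w(M^*)\ge \frac{n}{2}+f(n)\sqrt{n}$, Hoeffding applied to the lower tail of $X_{M^*}$ gives $\Pr{X_{M^*}\le \lfloor \frac n2\rfloor}\le \mathrm{e}^{-2f^2(n)}$, up to a rounding of order $1/\sqrt n$ in the exponent that can be absorbed. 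Hence $\Pr{T_1 \text{ wins under } M^*}\ge 1-\mathrm{e}^{-2f^2(n)}\ge \Pr{T_1 \text{ wins under } O}-\mathrm{e}^{-2f^2(n)}$.
\end{itemize}
The restriction $f(n)\le \sqrt n/2$ only ensures that the window lies inside $[0,n]$.

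\emph{Part (2).} Here $w(M^*)$ is within $\sqrt{n\log n}$ of $\frac n2$, so Hoeffding alone is too weak. The approach is to compare $X_{M^*}$ with a fair binomial $\mathrm{Bin}(n,\frac12)$.

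Write $p_e=\frac12+d_e$ for $e\in M^*$ and $\mu=\sum_{e\in M^*} d_e=w(M^*)-\frac n2$. I would use a Bernstein/Taylor-type expansion of the Poisson binomial tail around the symmetric point. Expanding $\Pr{X\ge k}$ as a function of $(d_e)$, the first-order term is $\sum_e d_e\cdot \Pr{Y_{-e}=k-1}$, where $Y_{-e}$ is the count on the remaining $n-1$ games. The second-order term is bounded by $\sum_e d_e^2$ times a maximal point-probability difference, which is of order $1/n$ when the variables are near $\frac12$.

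For $O$, the same expansion holds with its own deviations $d'_e$, and $\sum d'_e\le \mu$. The first-order terms are each controlled by roughly $\mu\cdot\max_j \Pr{Y=j}$. Comparing them requires the point probabilities $\Pr{Y_{-e}=k-1}$ for $O$ and for $M^*$ to be close to the binomial value $\binom{n-1}{k-1}2^{-(n-1)}$. That closeness again costs second-order terms of size $O(\frac1n)\sum_e d_e^2$.

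Collecting the terms should yield $\Pr{X_O\ge k}-\Pr{X_{M^*}\ge k}\le \frac{4+o(1)}{n+1}\sum_{e\in M^*}d_e^2$. The constant $\frac{4}{n+1}$ plausibly arises from the identity $\Var{}$ of the fair binomial, $\frac{n+1}{4}$, entering via $\binom{n}{j}2^{-n}$ ratios. The assumption $|\mu|\le \sqrt{n\log n}$ is used to keep the argument in the regime where these local binomial estimates hold with $1+o(1)$ accuracy.

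\emph{Main obstacle.} The hard step is part (2), specifically the fact that the bound involves only $M^*$'s deviations and not $O$'s. My first attempt would be a convexity/majorization argument: for a fixed mean, the tail $\Pr{X\ge k}$ near the center is maximized by the most spread-out $p$'s. This would let me bound $O$'s tail by that of a configuration determined by $w(O)\le w(M^*)$. The residual would then be bounded by the second-order term of $M^*$'s expansion, where the $\frac{4}{n+1}$ factor must be tracked carefully.
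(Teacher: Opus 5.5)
Your Part (1) is the paper's argument: Hoeffding's bound applied to the upper tail of $X_O$, respectively the lower tail of $X_{M^*}$. No rounding is needed in the second case, since $\lfloor\frac n2\rfloor\le\frac n2\le\Ex{X_{M^*}}-f(n)\sqrt n$ gives exactly $\mathrm{e}^{-2f^2(n)}$.

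Part (2) has a genuine gap, in both of its steps.

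\textbf{The expansion cannot be truncated.} $\Pr{X\ge k}$ is multilinear in the $p_e$, so its quadratic term around $p\equiv\frac12$ is $\frac{c_2}{2}\bigl(\mu^2-\sum_e d_e^2\bigr)$, where $c_2$ is the common mixed second derivative at $p\equiv\frac12$. This is dominated by $\mu^2$, not by $\sum_e d_e^2$. With $|\mu|$ up to $\sqrt{n\log n}$, the mean is of order $\sqrt{\log n}$ standard deviations from $\frac n2$. The tail therefore moves by $\Theta(1)$, and no fixed-order truncation is accurate. The mean shift has to be absorbed exactly by centring at $\mathrm{Bin}(n,\bar p)$ with $\bar p=w(M^*)/n$, so that only the deviations $p_e-\bar p$ remain. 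Ehm's bound (Theorem~\ref{thm:approx}) then gives $d_{TV}(X_{M^*},\mathrm{Bin}(n,\bar p))\le\frac{1}{(n+1)\bar p(1-\bar p)}\sum_{e\in M^*}(p_e-\bar p)^2\le\frac{4+o(1)}{n+1}\sum_{e\in M^*}(p_e-\frac12)^2$. The last step uses that the mean minimises squared deviations and that $\bar p\to\frac12$. This is where $\frac{4}{n+1}$ comes from. It is not a variance identity; the variance of $\mathrm{Bin}(n,\frac12)$ is $\frac n4$.

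\textbf{The majorization step points the wrong way.} Above the mean, the upper tail is maximised by \emph{equal} $p_e$'s, not by spread-out ones. Spreading the $p_e$ lowers $\sum_e p_e(1-p_e)$ and concentrates $X$. The correct tool is Hoeffding's comparison, Theorem~\ref{stoch}(2). If $w(O)\le\lfloor\frac n2\rfloor$, then $\Pr{X_O\le\lfloor\frac n2\rfloor}\ge\Pr{\mathrm{Bin}(n,\frac{w(O)}{n})\le\lfloor\frac n2\rfloor}\ge\Pr{\mathrm{Bin}(n,\bar p)\le\lfloor\frac n2\rfloor}$. The second inequality uses monotonicity of the binomial in its parameter together with $w(O)\le w(M^*)$. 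Subtracting this from $\Pr{X_{M^*}\le\lfloor\frac n2\rfloor}$ and applying Ehm is the paper's proof for the lower half. No $O$-dependent error ever appears, which resolves the obstacle you identified.

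\textbf{The upper half of (2) is false.} The comparison above needs $w(O)\le\lfloor\frac n2\rfloor$, which is guaranteed only when $w(M^*)\le\lfloor\frac n2\rfloor$. Once $w(O)\ge\lfloor\frac n2\rfloor+1$, Theorem~\ref{stoch} gives the opposite inequality. The paper's remark that counting losses gives ``the same technique'' fails, because $w(O)\le w(M^*)$ does not flip along with the roles of wins and losses. In fact (2) fails in that range, so no approach can prove it there. Take $n/2$ disjoint blocks $\bigl(\begin{smallmatrix} q&1\\ 2q-1&q\end{smallmatrix}\bigr)$ with $q=\frac12+\frac{1}{\sqrt n}$, and zeros elsewhere.
\begin{itemize}
\item The dual $u=(1,q)$, $v=(q-1,0)$ on each block certifies that the diagonal matching is maximum weight. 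It has all weights $q$, total $\frac n2+\sqrt n$, and $\sum_{e}(p_e-\frac12)^2=1$.
\item The diagonal matching loses with probability tending to $\Phi(-2)\approx 0.023$.
\item The anti-diagonal matching, with weights $1$ and $2/\sqrt n$, wins with probability $1-(1-2/\sqrt n)^{n/2}\to 1$.
\end{itemize}
The gap of about $0.023$ is far above $\frac{4+o(1)}{n+1}$.
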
 

In the proof of Theorem \ref{thm:main} we  rely on the following results.% from \cite{Dubhashi}, \cite[Theorem 2.1]{poisson2019}, and \cite[Theorem 1]{EHM'91}.

%\gregR{In the proof of Theorem \ref{thm:main} we will rely on the following facts.}
%
%
\begin{theorem}[\cite{Dubhashi}]\label{hoeffding}
	\gregR{Suppose that $n$ is a given positive integer and  let $X\sim PB(p_1,\ldots, p_n)$  be a Poisson binomial  random variable. Then, we have that  
	$
	\Pr{X\ge \Ex{X}+\delta}\le e^{\frac{-2\delta^2}{n}},
	$
	and 
	$
	\Pr{X\le \Ex{X}-\delta}\le e^{\frac{-2\delta^2}{n}}.
	$ }
\end{theorem}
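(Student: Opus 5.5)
The plan is the standard exponential-moment (Chernoff) argument combined with Hoeffding's lemma for bounded random variables. Write $X=\sum_{i=1}^n X_i$, where the $X_i$ are independent Bernoulli variables with $\Pr{X_i=1}=p_i$, so that $\Ex{X}=\sum_i p_i$. We may assume $\delta>0$, since otherwise the right-hand side $e^{-2\delta^2/n}$ is at least $1$ and there is nothing to prove.

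\emph{Upper tail.} Fix $\lambda>0$ and apply Markov's inequality to $e^{\lambda(X-\Ex{X})}$:
\[
\Pr{X\ge \Ex{X}+\delta}\le e^{-\lambda\delta}\,\Ex{e^{\lambda(X-\Ex{X})}}=e^{-\lambda\delta}\prod_{i=1}^n \Ex{e^{\lambda(X_i-p_i)}},
\]
where the factorisation uses the independence of the $X_i$. Each centred variable $Y_i=X_i-p_i$ has mean $0$ and takes values in an interval $[a,b]=[-p_i,1-p_i]$ of length $1$. Hoeffding's lemma then gives $\Ex{e^{\lambda Y_i}}\le e^{\lambda^2/8}$. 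Hence the probability is at most $e^{-\lambda\delta+n\lambda^2/8}$. Choosing $\lambda=4\delta/n$ minimises the exponent and yields $e^{-2\delta^2/n}$.

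\emph{Lower tail.} Apply the upper-tail bound to $n-X\sim PB(1-p_1,\ldots,1-p_n)$. The event $X\le \Ex{X}-\delta$ is exactly the event $n-X\ge \Ex{n-X}+\delta$, so the same bound $e^{-2\delta^2/n}$ follows.

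\emph{Hoeffding's lemma (the main obstacle).} We need: if $\Ex{Y}=0$ and $Y\in[a,b]$, then $\Ex{e^{\lambda Y}}\le e^{\lambda^2(b-a)^2/8}$.

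1. By convexity of $y\mapsto e^{\lambda y}$, for $y\in[a,b]$ we have $e^{\lambda y}\le \frac{b-y}{b-a}e^{\lambda a}+\frac{y-a}{b-a}e^{\lambda b}$.
2. Take expectations and use $\Ex{Y}=0$. Setting $q=-a/(b-a)\in[0,1]$ and $h=\lambda(b-a)$, this gives $\Ex{e^{\lambda Y}}\le e^{L(h)}$, where $L(h)=-hq+\log(1-q+qe^h)$.
3. Check that $L(0)=0$ and $L'(0)=0$.
4. Compute $L''(h)=u(1-u)$ with $u=qe^h/(1-q+qe^h)\in[0,1]$, so $L''(h)\le 1/4$.
5. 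Taylor's theorem with remainder then gives $L(h)\le h^2/8$, which is the claim.

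This second-derivative computation is the only non-routine step. For Bernoulli variables one can alternatively verify it directly, since here $b-a=1$ and $q=p_i$.
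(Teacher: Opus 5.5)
Your proof is correct. It is the standard Chernoff--Hoeffding argument via Hoeffding's lemma, and each step checks out:
\begin{itemize}
\item Markov's inequality applied to $e^{\lambda(X-\mathbf{E}[X])}$, and factorisation by independence;
\item the convexity bound, $L(0)=L'(0)=0$, and $L''=u(1-u)\le 1/4$;
\item the choice $\lambda=4\delta/n$;
\item the reflection $X\mapsto n-X$ for the lower tail.
\end{itemize}

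The paper itself gives no proof; it simply imports the bound from Dubhashi and Panconesi, whose derivation is organised differently. They first replace $\prod_i(1-p_i+p_ie^{\lambda})$ by $(1-p+pe^{\lambda})^n$, with $p=\mathbf{E}[X]/n$, using the AM--GM inequality. They then optimise the exponential-moment bound exactly in $\lambda$, which gives the relative-entropy form $\exp\bigl(-n\,D(p+t\,\|\,p)\bigr)$ with $t=\delta/n$. Only after that do they prove $D(p+t\,\|\,p)\ge 2t^2$, by Taylor's theorem with the second derivative controlled through $xy\le 1/4$ for $x+y=1$.

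So the same calculus fact drives both proofs. You apply it to the log-moment generating function of each summand; they apply it to the rate function. Your route is shorter and extends verbatim to independent summands in arbitrary ranges $[a_i,b_i]$, with $\sum_i(b_i-a_i)^2$ replacing $n$. Theirs passes through a sharper intermediate bound, from which the multiplicative Chernoff forms (useful when $p$ is far from $1/2$) also follow.

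One small correction to your opening reduction: for $\delta<0$ the right-hand side $e^{-2\delta^2/n}$ is strictly less than $1$, not at least $1$. In fact the inequality is false there: for $\delta=-n$ the left-hand side equals $1$ while the bound is $e^{-2n}$. The statement must therefore be read with $\delta\ge 0$, which is how the paper uses it ($\delta=f(n)\sqrt{n}$). Only $\delta=0$ is trivial, and your argument handles $\delta>0$.
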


	 \begin{theorem}[\textnormal{\cite[Theorem 2.1]{poisson2019}}]\label{stoch}
		\gregR{ 	Let $X\sim PB(p_1,\ldots,p_n)$ and let $\bar{p}=\sum_{i=1}^n \frac{p_i}{n}$. Define 
			$Y\sim Bin(n, \bar{p})$. Then, (1) for every $0\le k\le n\bar{p}-1$,
				$
				\Pr{X\le k}\le \Pr{Y\le k}$, and
				(2) for every $n\bar{p}\le k\le n$, 	$
				\Pr{X\le k}\ge \Pr{Y\le k}$.
			}
		\end{theorem}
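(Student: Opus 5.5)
The plan is to follow Hoeffding's classical extremal argument. Fix $n$, $k$ and the mean $\mu=n\bar p$, and consider
\[
g(p_1,\ldots,p_n)=\Pr{X\le k}
\]
as a continuous function on the compact set $D_\mu=\{(p_1,\ldots,p_n)\in[0,1]^n:\sum_i p_i=\mu\}$. Part (1) asks to show that for $k\le \mu-1$ the maximum of $g$ on $D_\mu$ is attained at the constant vector $(\bar p,\ldots,\bar p)$, i.e.\ at $Y\sim Bin(n,\bar p)$. Part (2) asks to show that for $k\ge \mu$ the minimum is attained there. Both parts use the same machinery, so I describe (1) and then indicate the changes for (2).

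\textbf{Step 1: the two-coordinate identity.} Pick two coordinates, say $p_1,p_2$, and let $Z$ be the sum of the other $n-2$ independent Bernoullis. Then $X_1+X_2$ takes the values $0,1,2$ with probabilities $1-s+t$, $s-2t$ and $t$, where $s=p_1+p_2$ and $t=p_1p_2$. Conditioning on $X_1+X_2$ gives
\[
g=A(s)+t\,\bigl(\Pr{Z=k}-\Pr{Z=k-1}\bigr).
\]
Consequently, if $s$ and all other coordinates are held fixed, $g$ is affine in $t$. Equalizing $p_1,p_2$ maximizes $t$, while pushing one of them to $0$ or $1$ minimizes $t$. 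The sign of $\Delta=\Pr{Z=k}-\Pr{Z=k-1}$ decides which move increases $g$.

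\textbf{Step 2: structure of an extremizer.} Take a maximizer $p^*\in D_\mu$; among all maximizers, choose one with the largest number of coordinates in $\{0,1\}$. Suppose two coordinates lie in $(0,1)$ and are unequal. If $\Delta\neq 0$, moving along the line $p_1+p_2=s$ strictly increases $g$ in one direction, contradicting maximality. If $\Delta=0$, we can push one of the two coordinates to $0$ or $1$ without changing $g$, contradicting the choice of $p^*$. Hence every maximizer of this kind has $r$ coordinates equal to $1$, $m$ coordinates equal to $0$, and the remaining $n-r-m$ coordinates all equal to a common value $a\in(0,1)$.

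In that case $X=r+Bin(n-r-m,a)$ with $r+(n-r-m)a=\mu$. The same reasoning applied to the minimization problem gives the identical structure for a minimizer, which is what part (2) needs.

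\textbf{Step 3: eliminating the $0$'s and $1$'s (main obstacle).} It remains to show that when $k\le\mu-1$ one must have $r=m=0$. For this I would apply the identity of Step 1 to a pair consisting of one coordinate equal to $a$ and one equal to $0$ (respectively $1$), and compute $\Delta$ explicitly for $Z=r+Bin(n-r-m-1,a)$ plus the fixed atoms. The condition that moving the pair toward equality does not increase $g$ says $\Delta\le 0$, which translates into an inequality between two consecutive binomial probabilities. I would then use the unimodality of the binomial, with its mode near the mean, to show that this inequality is incompatible with $k\le \mu-1$: in that range $k$ lies strictly left of the mode region of $Z$, which forces $\Delta>0$. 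This contradicts maximality, so no coordinates sit at $0$ or $1$, and the maximizer is $(\bar p,\ldots,\bar p)$.

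This mode/mean bookkeeping is the delicate part. The boundary cases $k=\mu-1$ and $k=\mu$ are exactly where the hypothesis $k\le n\bar p-1$ (respectively $k\ge n\bar p$) is needed. I would first verify these cases by a direct comparison of $\Pr{Z=k}$ and $\Pr{Z=k-1}$ via the ratio $\frac{(N-j+1)a}{j(1-a)}$, where $N=n-r-m-1$ is the number of trials of $Z$ and $j=k-r$.

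Part (2) is handled symmetrically: take a minimizer, obtain the same structure from Step 2, and show that for $k\ge\mu$ the sign of $\Delta$ again forbids atoms at $0$ and $1$. Alternatively, one can apply part (1) to the complementary variable $n-X$, with $k$ replaced by $n-k-1$, and take complements.
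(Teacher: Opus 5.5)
The paper gives no proof of this statement. It quotes it from the literature (it is Hoeffding's 1956 inequality), and your Steps 1--2 are the standard three-value reduction behind it.

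The gap is in Step 3, at the boundary $k=n\bar p-1$ that you flag as delicate; the strict inequality you claim there does not hold. Write $\ell=n-r-m$ for the number of coordinates equal to $a$. For a pair made of an $a$-coordinate and a $1$-coordinate, the remaining variables give $Z=(r-1)+\mathrm{Bin}(N,a)$ with $N=\ell-1$. So the index in your ratio is $j=k-r+1$, not $k-r$. Since $\frac{(N-j+1)a}{j(1-a)}>1$ iff $(N+1)a>j$, this pair gives $\Delta>0$ only when $k<\mu-1$.

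When $\mu$ is an integer and $k=\mu-1$, the ratio is exactly $1$. Then $\Delta=0$, $g$ is constant along the segment, and no contradiction follows. The $(a,0)$ pair is fine, since its threshold is $k<\mu$.

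Sharper unimodality bookkeeping will not fix this, because pairwise first-order tests genuinely fail to detect non-optimality here. Take $n=3$, $\mu=2$, $k=1$ and $p=(1,\tfrac12,\tfrac12)$:
- each pair $(1,\tfrac12)$ gives $g\equiv 1-\tfrac{s}{2}=\tfrac14$ along its segment;
- the pair $(\tfrac12,\tfrac12)$ already sits where $g$ is largest;
- yet $\Pr{Y\le 1}=\tfrac{7}{27}>\tfrac14$.

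So your argument cannot exclude a maximizer with $r\ge 1$ when $k=\mu-1$. There is also a smaller omission. If $k<r$, or if no coordinate lies in $(0,1)$, then $\Delta$ vanishes trivially. These cases must be dismissed separately, using $g=0<\Pr{Y\le k}$ when $\bar p<1$.

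The repair is short. Run Step 3 only for $k<\mu-1$, where every relevant $\Delta$ is strictly positive. Then obtain $k=\mu-1$ by continuity:
- if $\mu<n$, raise some coordinates of $p$ slightly so that $\sum_i p_i$ becomes $\mu+\eta$;
- apply the strict case with the same $k$;
- let $\eta\to 0$, using that both $\Pr{X\le k}$ and $\Pr{Y\le k}$ are continuous in the parameters;
- if $\mu=n$, both sides vanish.

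Alternatively, use a two-move argument. When $\Delta=0$ for a pair $(1,a)$, replace it by $(\tfrac{1+a}{2},\tfrac{1+a}{2})$, which leaves $g$ unchanged. Since $\ell a=k+1-r$ is a positive integer, $\ell\ge 2$, so the new maximizer contains the unequal interior pair $(\tfrac{1+a}{2},a)$. Applying the Step 2 dichotomy to that pair has three possible outcomes:
- it strictly increases $g$; or
- it yields a maximizer with more $\{0,1\}$-coordinates than $p^*$; or
- it yields one with as many, but with unequal interior values.

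Each outcome contradicts either the choice of $p^*$ or Step 2. Your complement reduction sends part (2) at $k=\mu$ to part (1) at $k'=(n-\mu)-1$, so the same repair also covers part (2).
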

	
		\begin{theorem}[\textnormal{\cite[Theorem 1]{EHM'91}}]\label{thm:approx}
		\gregR{ 	Suppose that  $X\sim PB(p_1,\ldots,p_n)$, and $\bar{p}=\sum_{i=1}^np_i/n$. Also,
			let $Y\sim Bin(n, \bar{p})$ is a binomial probability distribution. Then,			\[
			\max_{A\subseteq \{0,\ldots, n\}}	\left|\Pr{X\in A}-\Pr{Y \in A}\right|\le \frac{1-\bar{p}^n-(1-\bar{p})^n}{(n+1)\bar{p}(1-\bar{p})}\sum_{i=1}^n(p_i-\bar{p})^2.
			\]}

		\end{theorem}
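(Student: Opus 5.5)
This is a result quoted from \cite{EHM'91}, and I would reprove it with the Stein--Chen method adapted to the binomial law. Write $p=\bar p$, $q=1-p$, and $X=\sum_{i=1}^n I_i$ with independent $I_i\sim\mathrm{Bernoulli}(p_i)$. Set $X_i=X-I_i$, which is independent of $I_i$.

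\textbf{Step 1 (Stein characterization).} The binomial Stein operator is $(\mathcal{A}g)(k)=p(n-k)g(k+1)-qk\,g(k)$ on $\{0,\dots,n\}$. One has $\Ex{(\mathcal{A}g)(Y)}=0$ for every $g$ exactly when $Y\sim Bin(n,p)$. For a fixed $A\subseteq\{0,\ldots,n\}$, I would solve the first-order recursion $(\mathcal{A}g_A)(k)=\mathbf 1_A(k)-\Pr{Y\in A}$ explicitly, starting from $k=0$. This gives
\[
\Pr{X\in A}-\Pr{Y\in A}=\Ex{(\mathcal{A}g_A)(X)} .
\]

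\textbf{Step 2 (exact expansion).} Conditioning on $I_i$ gives $\Ex{I_i g(X)}=p_i\Ex{g(X_i+1)}$ and $\Ex{(1-I_i)g(X+1)}=(1-p_i)\Ex{g(X_i+1)}$. Summing over $i$,
\[
\Ex{(\mathcal{A}g)(X)}=\sum_i\bigl(p(1-p_i)-qp_i\bigr)\Ex{g(X_i+1)}=\sum_i(p-p_i)\Ex{g(X_i+1)} .
\]
Since $\sum_i(p-p_i)=0$, I may subtract any common term, so it suffices to control how $\Ex{g(X_i+1)}$ varies with $i$. For $i\neq j$, let $X_{ij}=X-I_i-I_j$. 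Then
\[
\Ex{g(X_i+1)}-\Ex{g(X_j+1)}=(p_j-p_i)\,\Ex{\Delta g(X_{ij}+1)},
\]
where $\Delta g(k)=g(k+1)-g(k)$. Averaging over $j$ turns the sum into
\[
-\frac1n\sum_{i,j}(p_i-p)(p_j-p_i)\,\Ex{\Delta g(X_{ij}+1)} .
\]
A naive bound on this double sum gives $\sum_i(p_i-p)^2$ times $\|\Delta g\|_\infty$. To get the sharper form I would follow Ehm and symmetrize in $(i,j)$, so that the relevant quantity becomes $\sum_i(p_i-p)^2$ multiplied by the sup of the increments of $g_A$.

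\textbf{Step 3 (the smoothness bound, main obstacle).} The crux is the estimate
\[
\sup_k|\Delta g_A(k)|\le\frac{1-p^{n}-q^{n}}{(n+1)pq}
\]
(up to the exact form of the $p^n,q^n$ terms), uniformly over $A$. My approach would be the standard one. First, by linearity write $g_A=\sum_{j\in A}g_{\{j\}}$. Second, show from the explicit solution that $\Delta g_{\{j\}}(k)$ is negative for $k\neq j$ and positive only at $k=j$. Hence $\sup_A\Delta g_A$ is attained at $A=\{j\}$, and it equals a ratio of binomial tail sums. Finally, bound that ratio by telescoping binomial probabilities; I expect the factor $(n+1)pq$ to come from the identity $\sum_k\Pr{Y=k}\,\bigl((n-k)p+kq\bigr)$-type normalisations. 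This combinatorial inequality on binomial tails is where I expect the real work to be. Plugging it into Step 2 yields the claimed total-variation bound.
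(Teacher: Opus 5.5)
The paper gives no proof here; it only cites Ehm (1991). Your Stein-method route is exactly Ehm's, and Steps 1 and 2 are correct. After symmetrizing you get the exact identity
\[
\Pr{X\in A}-\Pr{Y\in A}=\frac{1}{2n}\sum_{i,j}(p_i-p_j)^2\,\Ex{\Delta g_A(X_{ij}+1)},\qquad \frac{1}{2n}\sum_{i,j}(p_i-p_j)^2=\sum_{i}(p_i-\bar p)^2 .
\]
It is the symmetrization itself that produces the weight $\sum_i(p_i-\bar p)^2$; the unsymmetrized double sum only gives this with an extra factor $2$.

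\textbf{The gap is in Step 3.} The Stein-factor estimate with $p^n,q^n$ is false, and so is the displayed theorem as transcribed in the paper. Your hedge ``up to the exact form of the $p^n,q^n$ terms'' is exactly where the argument cannot close. Take $n=2$. Then $X_{12}\equiv 0$, so the identity reads $\Pr{X\in A}-\Pr{Y\in A}=\sum_i(p_i-\bar p)^2\,\Delta g_A(1)$. Solving the recursion for $A=\{1\}$ gives $g(1)=-q$ and $g(2)=p$, hence $\Delta g_{\{1\}}(1)=1$. By contrast, $\frac{1-p^2-q^2}{3pq}=\frac23$. So for any $p_1\neq p_2$ the left-hand side of the theorem equals $\sum_i(p_i-\bar p)^2$ (attained at $A=\{1\}$), while the right-hand side is $\frac23\sum_i(p_i-\bar p)^2$. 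For example, $p_1=1$, $p_2=0$ gives $\frac12$ on the left versus $\frac13$ on the right.

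\textbf{What your method does prove} is Ehm's actual bound, whose constant is $\frac{1-\bar p^{\,n+1}-(1-\bar p)^{n+1}}{(n+1)\bar p(1-\bar p)}$; the $n=2$ example attains it with equality. To finish Step 3, multiply the recursion by $b_k=\Pr{Y=k}$ and use $p(n-k)b_k=q(k+1)b_{k+1}$. This telescopes to $q(k+1)b_{k+1}g_A(k+1)=\Pr{Y\in A,\,Y\le k}-\Pr{Y\in A}\Pr{Y\le k}$, whence
\[
\Delta g_{\{j\}}(j)=\frac{\Pr{Y>j}}{p(n-j)}+\frac{\Pr{Y<j}}{qj},\qquad 1\le j\le n-1 .
\]
Now compare with $Y'\sim Bin(n+1,p)$, with pmf $b'$, using $b_r=\frac{n+1-r}{(n+1)q}\,b'_r$ and $b_r=\frac{r+1}{(n+1)p}\,b'_{r+1}$. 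Since $n+1-r\le n-j$ for $r>j$ and $r+1\le j$ for $r<j$, the two terms are at most $\frac{1}{(n+1)pq}\sum_{r=j+1}^{n}b'_r$ and $\frac{1}{(n+1)pq}\sum_{s=1}^{j}b'_s$ respectively. Therefore $\Delta g_{\{j\}}(j)\le\frac{\Pr{1\le Y'\le n}}{(n+1)pq}=\frac{1-p^{n+1}-q^{n+1}}{(n+1)pq}$.

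So both the factor $(n+1)pq$ and the exponent $n+1$ come from this comparison with $Bin(n+1,p)$. They do not come from the normalisation you guessed, which equals $2npq$. Combined with your sign structure ($\Delta g_A(k)\le\Delta g_{\{k\}}(k)$ and $\Delta g_{A^c}=-\Delta g_A$), this completes the proof of the corrected statement. For the paper the correction is harmless: the theorem is only used in Part~2 of Theorem~\ref{thm:main} with $\bar p\to\frac12$, where $\bar p^{\,n}$ and $\bar p^{\,n+1}$ are both $o(1)$.
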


	    	 We prove the first and the second parts of the theorem separately next.  
	 
\subsection*{Part 1. When $w(M^*)=\frac{n}{2}\pm f(n)\sqrt{n}$}\label{sub:1}	
\begin{proof}
		Let $M= \{e_1,\ldots,e_n\}$ be an arbitrary matching and $X_{M}$ be a random variable that counts the number of games won by $T_1$ under line-up $M$. Then $X_{M}$ follows Poisson binomial distribution $PB(p_{e_1},\ldots,p_{e_n})$. 
		, where $M= \{e_1,\ldots,e_n\}$. 
	 Thus,  $\Ex{X_{M}}=w(M)$.
	 Let us first assume that $w(M^*)=\frac{n}{2}-f(n)\sqrt{n}$. Then, for every matching $M$, including the optimal line-up $O$, we have  $w(M)\le w(M^*)$. Moreover, we have $f(n)\sqrt{n}=\frac{n}{2}-w(M^*)\le \frac{n}{2}-w(M)$, and  
	\begin{align*}
	\Pr{T_1 \text{ wins under } M}&= \Pr{X_{M}\ge \lfloor \frac{n}{2} \rfloor+1}\nonumber\\
	&\le \Pr{X_{M}\ge w(M)+(\frac{n}{2} -w(M))}\nonumber\\
	&=\Pr{X_{M}\ge \Ex{X_{M}}+f(n)\sqrt{n}}\le \mathrm{e}^{-2(f(n)\sqrt{n})^\frac{2}{n}}=\mathrm{e}^{-2f(n)^2},
	\end{align*}
using a concentration bound for Poisson binomial random variables (e.g., see Theorem \ref{hoeffding}).
Following that upper bound, if  $w(M^*)=\frac{n}{2}-f(n)\sqrt{n}$, then
	\begin{align}\label{case:1}
\Pr{\text{$T_1$ wins under  $O$ }}&\le \mathrm{e}^{-2f(n)^2}\le
\Pr{\text{$T_1$ wins under  $M^{*}$ }}+\mathrm{e}^{-2f(n)^2}.
\end{align}
Next, we consider the case where $w(M^*)=\frac{n}{2}+f(n)\sqrt{n}$. 
Define random variable $Y_{M^*}$ that counts the number of games lost under $M^*$. Then $Y_{M^*}$ follows Poisson binomial distribution $PB(1-p_{e_1},\ldots,1-p_{e_n})$, where we let $M^*=\{e_1,\ldots,e_n\}$.
One can check that $\Ex{Y_{M^*}}=n-w(M^*)=\frac{n}{2}-f(n)\sqrt{n}$. 
\begin{align*}
\Pr{\text{$T_1$ loses under  $M^*$ }}&= \Pr{Y_{M^*}\ge \lfloor \frac{n}{2}\rfloor+1}\\
&\le \Pr{Y_{M^*}\ge \Ex{Y_{M^*}}+(\frac{n}{2}-\Ex{Y_{M^*}})}\le \mathrm{e}^{-2f(n)^2},
\end{align*} 
where we have applied the same concentration bound as the previous case.
Hence, 
\begin{align*}
\Pr{\text{$T_1$ wins under  $M^*$ }}=1-\Pr{\text{$T_1$ loses under $M^*$ }}\ge 1-\mathrm{e}^{-2f(n)^2}.
\end{align*}
Thus,
\begin{align}\label{case:2}
\Pr{\text{$T_1$ wins under  $O$ }}\le  1\le 
\Pr{\text{$T_1$ wins under  $M^*$ }}+\mathrm{e}^{-2f(n)^2}
\end{align}
Hence, combining \eqref{case:1} and \eqref{case:2} gives the first part of Theorem \ref{thm:main}. 

	\end{proof}

  	 \subsection*{Part 2. When $w(M^*) \in [\frac{n}{2}-\sqrt{n\log n},\ \frac{n}{2}+\sqrt{n\log n}]$}%\label{sub:2}

  	 \begin{proof}
 	 	Let us first consider the case  where $w(M^*)\in [\frac{n}{2}-\sqrt{n\log n}, \frac{n}{2}-1)$.
  	 	Define random variables $X_O$ and $X_{M^*}$ that count the number games won by $T_1$ under $O$ and $M^*$, respectively. Moreover, define  binomial random variables
  	 	$Z_O\sim Bin(n, \frac{w(O)}{n})$ and $Z_{M^*} \sim Bin(n, \frac{w(M^*)}{n})$. Notice that 
  	 	 $w(O)\le w(M^*)$ and hence $Z_{M^*}$ stochastically dominates $Z_{O}$ (i.e, $ \Pr{Z_O\le \frac{n}{2}} \ge \Pr{Z_{M^*}\le \frac{n}{2}}$). 
  	 	 Since $w(M^*)<\frac{n}{2}$, we apply  the stochastic dominance between the Poisson and binomial random variables (e.g., see \gregR{Theorem \ref{stoch} (2)}  ) and we have that   
  	 	\begin{align*}
  	 		\Pr{\text{$T_1$ loses under $O$}}=\Pr{X_O\le \frac{n}{2}}\ge  \Pr{Z_O\le \frac{n}{2}} \ge \Pr{Z_{M^*}\le \frac{n}{2}},
  	 	\end{align*} 
  	 	On the other hand,  the optimal line-up $O$ minimizes the losing probability of $T_1$ and
  	 	hence, by above inequality we have that 
		   \begin{align*}
			&\Pr{\text{$T_1$ loses under $M^*$}} \\
			&\qquad\qquad= \Pr{X_{M^*}\le \frac{n}{2}}\ge \Pr{\text{$T_1$ loses under $O$}}\ge \Pr{Z_{M^*}\le \frac{n}{2}}.
		   \end{align*}
  	%	\[
  	% \Pr{\text{$T_1$ loses under $M^*$}}= 
	% \Pr{X_{M^*}\le \frac{n}{2}}\ge \Pr{\text{$T_1$ loses under $O$}}\ge \Pr{Z_{M^*}\le \frac{n}{2}}.
  	% 	\]  
  	 	Applying the above inequality and  Theorem \ref{thm:approx} results in
  	 	\begin{align*}
  	 	&\Pr{\text{$T_1$ wins under $O$}}-\Pr{\text{$T_1$ wins under $M^*$}}\\
  	 	&\qquad\qquad= (1-\Pr{\text{$T_1$ wins under $M^*$}}) - (1- \Pr{\text{$T_1$ wins under $O$}})\\
  	 	&\qquad\qquad=\Pr{\text{$T_1$ loses under $M^*$}}-\Pr{\text{$T_1$ loses under $O$}}\\
  	 	&\qquad\qquad\le \Pr{X_{M^*}\le \frac{n}{2}}-\Pr{Z_{M^*}\le \frac{n}{2}} \\
		&\qquad\qquad\le \frac{1-(\bar{p})^{n}-(1-\bar{p})^n}{(n+1)(1-\bar{p})\bar{p}}\sum_{e\in M^*} (p_e-\bar{p})^2,
  	 	\end{align*}
  	  where $\bar{p}=\frac{w(M^*)}{n}$.
  	 	Since we have $n\bar{p}\in (\frac{n}{2} -\sqrt{n\log n}, \bf{\frac{n}{2}})$, and $n$ is an asymptotically large,  we have $\bar{p}\approx \frac{1}{2}$ and thus 
  	 	\begin{align*}\label{upper}
  	 	\frac{1-(\bar{p})^{n}-(1-\bar{p})^n}{(n+1)(1-\bar{p})\bar{p}}\sum_{e\in M^*}(p_e-\bar{p})^2\le  \frac{(4+o(1))}{n+1}\sum_{e\in M^*} (p_e-\frac{1}{2})^2.
  	 	\end{align*}
  	 	Therefore, if  $w(M^*)\in [\frac{n}{2}-\sqrt{n\log n}, \frac{n}{2}-1)$, then
  	 	\[
  	 \Pr{\text{$T_1$ wins under $O$}}\le \Pr{\text{$T_1$ wins under $M^*$}} + \frac{(4+o(1))}{n+1}\sum_{e\in M^*}(p_e-\frac{1}{2})^2.
  	 	\]

%%\greg{It might be useful to include the statement of Theorems A2 and A3 in the main body of the paper (to allow a reader to understand the proof without reading the appendix)}
%     
  	 To derive the same upper bound for the case where $w(M^*)\in [\frac{n}{2}, \frac{n}{2}+\sqrt{n\log n}]$, we define random variables that count the number of  games lost by $T_1$ and the same technique for the above case follows.
\end{proof}

			\section{Conclusion}
			
We proposed the {\sc Team Order} problem, which naturally captures several strategic scenarios in information systems and team competitions. 
% We have demonstrated that it is NP-complete in the general case. 
\gregR{We have shown that in the case in which the input probabilities are limited to three values (including 0) it is tractable and have shown that it is possible to efficiently compute a line-up which is close to the optimal in terms of the probability of winning, which is useful when the information about the players' relative strength is limited (e.g., if it is only known when a player is ``strong'' or ``weak'' against an opponent).}
 One of our central results is a PTAS for the \textsc{Team Order} problem. We note that while we focused on the probability of winning against more than a half of opposing players, our results hold for any such threshold.

\gregR{We conclude by highlighting some important directions for future work.} 
First, the complexity of solving {\sc Team Order} exactly is open.  We believe that this is a challenging question that also has implications on the related problem of {\sc Colored Bipartite Matching}. It is known that it  is NP-complete when $d$ is a variable~\cite{GeSz11a}.  However, the complexity of this problem is open if $d$ is a constant larger than $2$, or if $d=2$ but the graph is incomplete (which corresponds to $\{\alpha, \beta, 0\}$)~\cite{YMS02a}. This motivates further study between the connections of the two discussed problems. %\gregRN{Further, we note that while it is not known if finding an exact solution to {\sc Team Order} is possible in polynomial time, we demonstrate that finding a solution suited for practical purposes is tractable.}
    It is also not known whether {\sc Team Order} admits a fully polynomial-time approximation scheme (FPTAS). Resolving this question would be a strong improvement over our results.
%
   % \gregRN{We note that we assumed the full knowledge of the probability of winning. Relaxing this assumption would be an interesting follow-up.}

   While our result show the complexity of computing a best response to the opponents line-up, it is natural to study the extension in which multiple teams strategize. %In the Appendix we provide preliminary results on this topic, e.g., regarding finding Nash equilibria in the setting with two teams.}
   Regarding sport events, it would also be interesting to see if the results change under other natural assumptions, such as  all of the players having an objective level of skill. 
   %E.g., in chess contestants have corresponding Elo ratings. Then, we might assume that the pairwise winning probabilities is monotonic regarding this measurement.% One can also assume even more structured winning probabilities. 
   For example, if a player $i$ has better skill than a player $j$, then $i$ might always have a better probability of winning against any player $k$ than $j$'s probability of beating $k$.

\section*{Acknowledgments}

This work was supported by the NSF-CSIRO grant on “Fair Sequential Collective Decision-Making" (Grant  No. RG230833) and by DSTG under the project ``Distributed multi-agent coordination for mobile node placement.'' (Grant  No. RG233005). 
This project has also received funding from the European 
    Research Council (ERC) under the European Union’s Horizon 2020 
    research and innovation programme (grant agreement No. 101002854). 
Grzegorz Lisowski acknowledges support by the European Union under the Horizon Europe project %\href{https://perycles-project.eu/}
{Perycles}  (Participatory Democracy that Scales). 

    \vspace{2em}

	\includegraphics[width=0.4\columnwidth]{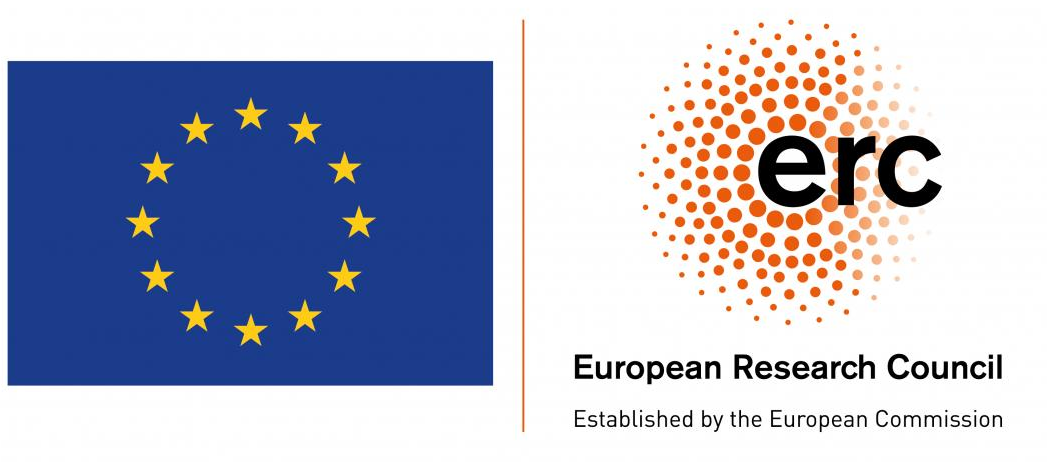}
~
\hfill
\raisebox{.8cm}{\includegraphics[width=0.4\columnwidth]{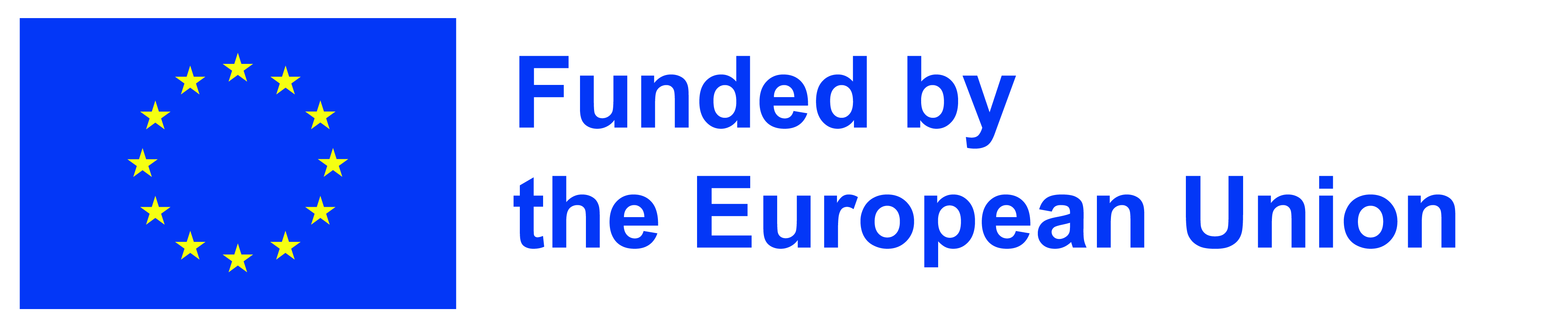}}

\bibliographystyle{ACMReferenceFormat} 
 \bibliography{abb,group,haris2B,aziz,grzegorz}

 %\bibliography{../abb,../group,../aziz}%,../haris2,../aziz,../grzegorz}

 %\bibliography{abb,group,haris2,aziz,grzegorz}
\

\newpage

\section*{Appendix}

%	 \begin{proposition}\label{prop:allmatched}
%  									%For $m=2$, 
%									%\jiarui{I removed ``For $m=2$''. I think $m$ is not defined in the current version.}
%									Given the line-up of $T_2$, the line-up of $T_1$ that maximizes the probability of winning \emph{all} the matches can be computed in polynomial time.
%  										 \end{proposition}
\section*{Proof of Lemma~\ref{lem:budget}}

\begin{proof}
	According to \cite[Lemma 4]{Berger11}, one can find in polynomial time a feasible solution $M$ to a \textsc{Budgeted Matching} instance $I_b(G,w,c,B)$, such that  $w(M)=\sum_{e\in M}w(e)\ge opt - 2w_{max}$, where $w_{max}$ is the highest edge weight.
	Now that $w(e)=p_e$ for each $e \in E$, it follows immediately that $w(M)\ge opt-2$.
\end{proof}

\section*{Proof of Proposition~\ref{pro:winning}}

  	\begin{proof}
  The algorithm is via dynamic programming. Let $p_j$ be the probability of the first team winning the \gregR{$j^{\text{th}}$} match according to the line-ups. 
  We keep track of an array $A$, where $A[i,j]$ denotes the probability of $i$ wins after the first $j$ matches. 
  In such an array, $A[0,1]=1-p_1$ and \harisnew{$A[0,0]=1$, $A[i,0]=$0 for $i \neq 0$}, 
\harisnew{and $A[0,j] = \prod_{k=1}^{j} (1-p_k)$ for all $j \in [n].$}
 Furthermore, we can compute a given entry  $A[i,j]$  by $A[i,j]=A[i-1,j-1] \cdot p_j+A[i,j-1] \cdot (1-p_j)$. 
  The array takes size $n(n+1)$ so the probability of first team winning a given number of matches $k$ can be computed by reading off the entry $A[k,n]$. In order to compute the probability of the first team winning more matches, we simply need to find the sum $\sum_{k\geq \floor{\frac{n}{2}}+1}A[k,n].$

   \end{proof}

\section*{Proof of Lemma~\ref{obs:1}}

\begin{proof}
By definition of  $ \mathcal{M}^-(\varepsilon)$,   for every $N \in \mathcal{M}^-(\varepsilon)$, we have  $N \subset F$  and hence $c(e) =  p_e \cdot (1-p_e) > 0$ for all $e \in N$ according to the definition of $F$.
	By assumption, we have that 
	\[
	c(e)=p_e \cdot (1-p_e) \ge \delta(1-\delta)\ge \frac{\delta}{2},
	\]
	which follows from definition of $\delta$  and the fact that $ 1-\delta\ge \frac{1}{2}$.
	Hence, $|N| \le \frac{2c(N)}{\delta} \le 2\cdot \delta^{-1}\cdot\varepsilon^{-2}$. It follows that
	
	%\[
	\begin{equation*}
	\begin{split}
	|\mathcal{M}^-(\varepsilon)|\le \sum_{s=1}^{\lceil 2\delta\varepsilon^{-2}\rceil}{\binom{|F|}{s}}\le 
	\lceil2\delta^{-1}\varepsilon^{-2}\rceil\cdot {\binom{|F|} {2} \delta\varepsilon^{-2}}\le \\
	{n^{4\delta^{-1}\varepsilon^{-2}}},
	\end{split}
	\end{equation*}
%	\] 
	where we use the fact that $|F|\le n^2$.
	According to the definition of $\G^-(\varepsilon)$, we have $|\G^-(\varepsilon)|\le |\mathcal{M}^-(\varepsilon)|$; hence, $|\mathcal{G}^-(\varepsilon)|\le n^{4\delta^{-1}\varepsilon^{-2}}$.  
	
	Similarly, consider every $N \in \mathcal{M}^+(\varepsilon)$. 
	By definition, every $N \in \mathcal{M}^{+}(\varepsilon) $ is a minimum size matching with $c(N)>\varepsilon^{-2}$.
	This also implies that $|N| \le 2\delta^{-1}\varepsilon^{-2}$. A similar argument results in  
	$|\mathcal{G}^+(\varepsilon)|\le n^{4\delta^{-1}\varepsilon^{-2}}$.	
	
\end{proof}

\section*{Proof of Proposition~\ref{prop:allmatched}}
             
  										 \begin{proof}
  										 Construct a bipartite graph $G' = (T_1\cup T_2, E, w)$, where $T_1$ and $T_2$ are the two independent sets of the graph and $\{t_1^i,t_2^j\}\in E$ if $t_1^i$ has non-zero probability of beating $t_2^j$. 
										 The weight of each edge $e = \{t_1^i,t_2^j\}$ is $w_e = \log(p(t_1^i,t_2^j)$.
										 %the log of the probability of $t_1^i$ beating $t_2^j$. 
										 Consider a maximum weight perfect matching in $G'$. Then the total weight $\sum_{\{t_1^i,t_2^j\}\in M}\log(p(t_1^i,t_2^j))$ is maximized among all perfect matchings. 
										 %Note that any matching that maximizes $\sum_{\{t_1^i,t_2^j\}\in M}\log(p(t_1^i,t_2^j))$ corresponds to an assignment $M$ if players of $T_1$ to players of $T_2$ that maximizes the term 
										 Equivalently, this means that the quantity 
  										 $\prod_{\{t_1^i,t_2^j\}\in M}(p(t_1^i,t_2^j))$ is also maximized, which is exactly to the probability of $T_1$ winning the competition. 
  									Hence, the problem reduces to computing a maximum weight perfect matching, which can be solved in $O(n^3)$ via the Hungarian algorithm~\cite{kuhn55hungarian}.
  											 \end{proof}

\section*{Proof of Theorem~\ref{thm:alg2}}
%\begin{theorem}
%			\label{thm:alg2}
%			Algorithm~\ref{alg2} computes an $\varepsilon$-approximate solution to \textsc{Team Order} and runs in time $n ^{O(\delta^{-1}\varepsilon^{-2})}$, where $\delta=\min_{ p_e\notin\{0,1\}}\min_{e\in E}\{p_e, 1-p_e\}$.  
%			%\jiarui{I changed $3 \varepsilon$ to $\varepsilon$, and also changed Algorithm~\ref{alg2} accordingly.}
%		\end{theorem}
		\begin{proof}
			Assume that $O$ is the optimal solution to the \textsc{Team Order} instance. If $c(O)\le \varepsilon^{-2}$, then there exists $N^* \in \mathcal{M}^-(\varepsilon)$ such that $N^* \cap O = N^*$. 
			Moreover, since $N^*$ contains all the edges with weight $0$ or $1$ in $O$, it is not hard to see that $O$ must also be a maximum weight matching of $H^* =(T_1\cup T_2, N^* \cup \overline{F}_{N^*}, p)$.
			Hence, the for-loop at Lines~\ref{ln:for-1-start}--\ref{ln:for-1-end} is able to find out $O$.  
			
			%	% and hence there is $H\in \G^{-}(\varepsilon)$ such that $O\subset H=(T_1\cup T_2, N\cup \overline{F}(N), p)$.
			%	%, where $\overline{F}$ is the set of edges with $0$ and $1$-weight edges. 
			%  	Also, we deduce that  $O$ dose not have any uncertain edges except including $N$. 
			%	Therefore, a maximum weight perfect matching for $H$ has the same number of $1$ and $0$-weight edges as $O$. 
			%	On the other hand, by Theorem \ref{thm:main} \jiarui{do you mean Proposition~\ref{pro:winning}} we are able to compute the   winning probability corresponding to a  given perfect matching (lienup). Since $O$ maximises the winning probability, the algorithm finally outputs $M^{(1)}_\varepsilon$ as an optimal line-up.
			
			Let us then assume $c(O)> \varepsilon^2$ in the remainder of the proof.
			In this case there exists $N^* \in \mathcal{M}^+(\varepsilon)$ such that $O\cap N^*=N^*$ and hence there is $H^*=(T_1\cup T_2, N^*\cup E_{N^*})\in \G^{+}(\varepsilon)$ so that $O\subset H^*$. Therefore,  every perfect matching  $M_{H^*}$ of $H^*$ contains $N^*$ and we have that $c(M_{H^*})\ge c(N^*)>\varepsilon^{-2}$.  On the other hand, by  the definition of $x_i$'s in the algorithm, there exists index $j$ such that 
			\gregR{$c(O)\in[x_{j-1}, x_j]\subset(\varepsilon^{-2}, \frac{n}{4}]$},
			which follows from the fact that $c(O)=\Var{X_{O}}\le \frac{n}{4}$ as  $X_{O}$ is a Poisson binomial random variable. Also note that $x_{j}-x_{j-1}= \frac{1}{n}$.
			%Moreover, if there is another $M_H$ with $c(M_H)\in [x_{j-1}, x_j]$, then $c(M_H)=c(O)$. Otherwise
			%By the numerical width assumption and the definition of $x_i$'s in the algorithm we have that  for every matching there is a unique interval so that $c(M)\in [x_{i-1}, x_{i}]$. Moreover, if there is another matching $M'$  with $c(M')\in [x_{i}, x_{i+1}]$, then $c(M)=c(M')$. Because we have that 
			%	$|c(M_H)-c(O)|\ge 2^{-l_0}> x_{j}-x_{j-1}=2^{-l_0-1}$. %Therefore, there are 
			%$H \in \mathcal{G}$ and index $j$ such that  $O\subset H$ and  
			%	\[
			%\varepsilon^{-2}\le x_{j-1}\le c(O)\le x_j.
			%	\] 
			Now, let us execute Line~\ref{ln:Mi-st-1} with $H = H^*$.
			Since $O$ is a feasible solution for $I_b(H^*,w,c,x_j)$, $M^*_j$ exists and can be computed in polynomial time according to Lemma \ref{lem:budget}. 
			We consider the following two cases.
			
			\paragraph{Case 1.} $w(M^*_j)\ge \lfloor \frac{n}{2} \rfloor$.
			%\begin{itemize}
			%\item 
			We have 
			\begin{align}\label{lab:8}
			w(M^*_j)\ge w(A)-2\ge w(O)-2,
			\end{align}
			where $A$ is an optimal solution to $I_b(H^*,w,c,x_j)$.
			Moreover, $c(M^*_j)\le x_j$ as $M_j^*$ is a feasible solution to $I_b(H^*,w,c,x_j)$, so it must be that
  
			\begin{align}\label{lab:8'}
			\varepsilon^{-2}\le c(M^*_j)\le x_j\le  c(O)+x_{j}-x_{j-1}= c(O)+ \frac{1}{n},
			\end{align}
			which follows from $c(O)\in [x_{j-1}, x_j]$ and $x_{j}-x_{j-1}=\frac{1}{n}$.
			Using  
			Inequalities (\ref{lab:8}) and (\ref{lab:8'}) we will have that 
			\begin{align}
			\frac{\lfloor \frac{n}{2}\rfloor-w(M^*_j)}{\sqrt{c(M^*_j)}}
			&\le\frac{\lfloor \frac{n}{2}\rfloor-w(M^*_j)}{\sqrt{c(O)+\frac{1}{n}}} & \text{(as $w(M^*_j) \ge \lfloor \frac{n}{2}\rfloor$)} \nonumber\\
			&\le \left(1-\frac{1}{2n\cdot c(O)}\right) \frac{\lfloor \frac{n}{2}\rfloor-w(M^*_j)}{\sqrt{c(O)}} & \text{(as $(1+x)^{-\frac{1}{2}}\ge (1-\frac{x}{2})$)}\nonumber\\
			&\le  \frac{\lfloor \frac{n}{2} \rfloor-w(M^*_j)}{\sqrt{c(O)}} + \frac{1}{2 n \cdot c(O)}\cdot \frac{w(M^*_j)-\lfloor \frac{n}{2} \rfloor}{\sqrt{c(O)}}\nonumber\\ 
			&\le \frac{\lfloor \frac{n}{2} \rfloor-w(O)+2}{\sqrt{c(O)}} + \frac{(n\cdot \varepsilon)}{n}  &\hspace{-10mm} \text{(as $w(M^*_j)\le n$ and $c(O) > \varepsilon^{-2}>1$)} \nonumber\\
			& \leq \frac{\lfloor \frac{n}{2}\rfloor-w(O)}{\sqrt{c(O)}}+\frac{2}{\sqrt{c(O)}} +\varepsilon\nonumber\\
			&\le   \frac{\lfloor \frac{n}{2}\rfloor-w(O)}{\sqrt{c(O)}} + 3\varepsilon 
			\label{lab:5}
			%   \frac{\lfloor n/2\rfloor-w(O)}{\sqrt{c(O)+\Delta}}+\frac{2}{\sqrt{c(O)+\Delta}}
			%  \nonumber\\
			%&\le  
			%\frac{1}{\sqrt{1+\Delta/c(O)}}\cdot  \frac{\lfloor n/2\rfloor-w(O)}{\sqrt{c(O)}}+\frac{2}{\sqrt{c(O)}}\nonumber\\
			% &\le \frac{\lfloor n/2\rfloor-w(O)}{\sqrt{c(O)}}+\frac{\Delta}{2c(O)}\cdot \frac{w(O)-\lfloor n/2\rfloor}{\sqrt{c(O)}}+2\varepsilon. & \text{(as $c(O) > \varepsilon^{-2}>1 $)}\nonumber\\
			%  & \le \frac{\lfloor n/2\rfloor-w(O)}{\sqrt{c(O)}}+\frac{1}{2n}\cdot n \cdot \varepsilon+2\varepsilon & 
			%  & \le \frac{\lfloor n/2\rfloor-w(O)}{\sqrt{c(O)}}+ 3\varepsilon
			%  \label{lab:5}
			\end{align}
			Furthermore, since $w(M^*_j)\ge \lfloor \frac{n}{2} \rfloor$, Line~\ref{ln:Mi-st-2} will not be executed, and subsequently, we obtain a perfect matching $M^{(2)}_\varepsilon$ with the minimum corresponding value. According to the above upper bound, we have  
			\begin{align}\label{lab:4}
			\frac{\lfloor \frac{n}{2} \rfloor-w(M^{(2)}_\varepsilon)}{\sqrt{c(M^{(2)}_\varepsilon)}}\le \frac{\lfloor \frac{n}{2} \rfloor-w(M^*_j)}{\sqrt{c(M^*_j)}}\le 
			\frac{\lfloor \frac{n}{2} \rfloor-w(O)}{\sqrt{c(O)}}+3\varepsilon.
			\end{align}
			Recall that $w(O)=\Ex{X_O}$ and $\varepsilon^{-2}<c(O) = \Var{X_O}$. 
			By Corollary \ref{coor:1}, we have 
			\[
			\Pr{X_O\le \lfloor \frac{n}{2} \rfloor}\ge \Phi\left(\frac{\lfloor \frac{n}{2} \rfloor-\Ex{X_O}}{\sqrt{\Var{X_O}}}\right)-\varepsilon\
			\]
			Note that according to the definition of $\G^+(\varepsilon)$, it holds for every perfect matchings $M$ considered in Lines~\ref{ln:for-2-start}--\ref{ln:for-2-end} that $c(M)>\varepsilon^{-2}$. 
			This allows us to apply Corollary \ref{coor:1} to $M^{(2)}_\varepsilon$, too. 
			Moreover, by the fact the $\Phi(x)$ is an increasing function in $x$ and the upper bound in \eqref{lab:4},  we now have 
			\begin{align}
			\Pr{X_{M^{(2)}_\varepsilon} \le \lfloor \frac{n}{2} \rfloor}
			&\le \Phi\left(\frac{\lfloor \frac{n}{2} \rfloor-\Ex{X_{M^{(2)}_\varepsilon}}}{\sqrt{\Var{X_{M^{(2)}_\varepsilon}}}}\right)+\varepsilon \nonumber\\
			&\le \Phi\left(\frac{\lfloor \frac{n}{2} \rfloor-\Ex{X_O}}{\sqrt{\Var{X_O}}}+3\varepsilon\right)
			+\varepsilon \nonumber\\
			&\le \Phi\left(\frac{\lfloor \frac{n}{2} \rfloor-\Ex{X_O}}{\sqrt{\Var{X_O}}}\right)+2.5\varepsilon,
			\end{align}
			where the last inequality follows from the fact that 
			$\Phi(x+h)\le \Phi(x)+\frac{h}{2}$, for every $h>0$. 
			Subtracting the above two upper bounds gives
			\[
			0\le \Pr{X_{M^{(2)}_\varepsilon}\le \lfloor \frac{n}{2} \rfloor}-\Pr{X_O\le \lfloor \frac{n}{2} \rfloor}\le 3.5\varepsilon \le  \varepsilon.
			\]
			Thus, in this case, $M^{(2)}_\varepsilon$ is an $\varepsilon$-approximation for $O$.

			%\item  
			\paragraph{Case 2.} $w(M^*_j)<\lfloor \frac{n}{2} \rfloor$. 
			According to \eqref{lab:8}, we have $w(O)-2\le  w(M^*_j)\le \lfloor \frac{n}{2} \rfloor$. 
			Thus, 
			\begin{align}\label{lab:nice}
			w(O)\le \lfloor \frac{n}{2} \rfloor+2.
			\end{align}
			In this case Line~\ref{ln:Mi-st-2} will be executed and $M^*_j$ will be set to a solution to $I_r(H^*,w,c,x_{j-1})$.  
			Let us refer to this solution as $M^{**}_j$ and further consider two cases: 
			(1) $w(M^{**}_j)\le \lfloor \frac{n}{2} \rfloor$ and 
			(2) $w(M^{**}_j)> \lfloor \frac{n}{2} \rfloor$.
			
			If $w(M^{**}_j)\le \lfloor \frac{n}{2} \rfloor$. Since $x_{j-1}\le c(O)\le x_{j}$
			\[
			c(M^{**}_j)\ge x_{j-1}\ge c(O)-\frac{1}{n}.
			\] 
			
			Moreover, by Lemma \ref{lem:reward} we have that $w(O)-2\le w(B)-2\le w(M^{**}_j)$, where $B$ is an optimal solution to 
			$I_r(H^*,w,c,x_{j-1})$. 
			Thus, similar to \eqref{lab:5}, we have
			\begin{align*}
			&\frac{\lfloor \frac{n}{2}\rfloor-w(M^{**}_j)}{\sqrt{c(M^{**}_j)}}\le\frac{\lfloor \frac{n}{2}\rfloor-w(M^{**}_j)}{\sqrt{c(O)-\frac{1}{n}}}= \frac{\lfloor \frac{n}{2}\rfloor-w(M^{**}_j)}{\sqrt{1-\frac{1}{(n\cdot c(O))}}\sqrt{c(O)}} \\
			&\le \left(1+\frac{1}{n\cdot c(O)}\right)\frac{\lfloor \frac{n}{2}\rfloor -w(M^{**}_j)}{\sqrt{c(O)}} & \text{(as $(1-x)^{-\frac{1}{2}}\le (1+x)$)} \\
			&\le \frac{\lfloor \frac{n}{2} \rfloor-w(O)}{\sqrt{c(O)}}+ \frac{2}{\sqrt{c(O)}}+ \frac{1}{n\cdot c(O)}\frac{\lfloor \frac{n}{2}\rfloor -w(M^{**}_j)}{\sqrt{c(O)}}\\
			& \leq  \frac{\lfloor \frac{n}{2} \rfloor-w(O)}{\sqrt{c(O)}}+ 3 \varepsilon &\hspace{-10mm} \text{(as $0\le w(M^{**}_j\le \frac{n}{2}$ and $c(O)\ge \varepsilon^{-2}$)}
			\end{align*}
			with which the same argument we used above shows that  $M^{(2)}_\varepsilon$ is a $\varepsilon$-approximation.
			%where the last inequality follows from $c(O)>\varepsilon^{-2}$. 
			%  Now, we have an upper bound similar to (\ref{lab:5}) and 
			%  the rest exactly follows as previous case.  Therefore, one can prove that  $M^{(2)}_\varepsilon$ is a $3\varepsilon$-approximation for the problem.  
			
			Hence, it remains to consider the case where $w(M^{**}_j)>\lfloor \frac{n}{2}\rfloor$. 
			In this case we have 
			\[
			\frac{\lfloor \frac{n}{2} \rfloor-w(M^{(2)}_\varepsilon)}{\sqrt{c(M^{(2)}_\varepsilon)}}\le \frac{\lfloor \frac{n}{2}\rfloor-w(M^{**}_j)}{\sqrt{c(M^{**}_j)}}<0,
			\]
			which implies that
			\begin{align}\label{lab:7}
			\Pr{X_{M^{(2)}_\varepsilon}\le \lfloor \frac{n}{2}\rfloor}\le 
			\Phi\left(\frac{\lfloor \frac{n}{2}\rfloor-\Ex{X_{M^{(2)}_\varepsilon}}}{\sqrt{\Var{X_{M^{(2)}_\varepsilon}}}}\right)+\varepsilon.
			\le \Phi(0)+\varepsilon
			\end{align}
			Given \eqref{lab:nice}, we have 
			\[
			\frac{\lfloor \frac{n}{2}\rfloor-w(O)}{\sqrt{c(O)}}\ge \frac{-2}{\sqrt{c(O)}}\ge -2\varepsilon.
			\]
			Hence,
			\begin{align}\label{lab:9}
			\Pr{X_O\le \lfloor \frac{n}{2}\rfloor}\ge \Phi(-2\varepsilon)-\varepsilon\ge \Phi(0)-2\varepsilon,
			\end{align}
			where we used the fact that $\Phi(x)$ is an increasing function in $x$ and $\Phi(x-h)\ge \Phi(x)- \frac{h}{2}$, for every $h>0$.
			Putting (\ref{lab:7}) and (\ref{lab:9}) together results in 
			\[
			0\le  \Pr{X_{M^{(2)}_\varepsilon}\le \lfloor \frac{n}{2}\rfloor}-\Pr{X_O\le \lfloor \frac{n}{2}\rfloor}\le 3\varepsilon \le  \varepsilon.
			\]
			Hence, we also obtain an $\varepsilon$-approximation.
			%\end{itemize}
			
			We have proven that the algorithm produces an $\varepsilon$-approximation for the \textsc{Team Order} instance. 
			Furthermore, according to Lemma~\ref{obs:1}, $|\G^+(\varepsilon)|$ and $|\G^-(\varepsilon)|$ are bounded by $n^{4\delta^{-1}\varepsilon^{-2}}$. Since the computations inside the for-loops finish in polynomial-time, the algorithm runs in polynomial-time.   
		\end{proof}

\section*{Extensions}

 Beyond the setting we have considered so far, {\sc Team Order} can also be extended in other natural ways. We discuss two extensions of {\sc Team Order}: one extends it to more than two teams, and the other considers strategic line-up choosing of both teams.
	
  	\subsection*{Team Order with More than Two Teams}
	
	A team might need to play with more than one team with the same fixed line-up of players. 
	In this case, they would like to commit to a line-up that maximizes the probability of beating a target number of teams.
	We show that even for degenerate instances, where the winning probabilities are either $0$ or $1$, computing an optimal commitment is NP-hard.  A degenerate instance involving $m$ opposing teams can be represented by a complete bipartite graph $G$ with $n$ vertices on each side.
	Suppose that our team is $T_0$, and let the opposing teams be $T_1, \dots, T_m$.
	 Vertices on the left side of $G$ represent players in $T_0$, and those on the right side represent players from the other opposing teams, whose line-ups of players are fixed.
		Moreover, every edge $\{i, j\} \in E$ is assigned with a $m$-dimensional vector, where the \gregR{$k^{\text{th}}$} component is one if player $i$ in $T_0$ beats player $j$ in team $T_k$, and it is zero otherwise.	
		The degenerated version of the \textsc{Multi-Dimensional Team Order} problem is defined as follows.
		
			\pbDef{\textsc{(Degenerate) Multi-Dimensional Team Order}}
			{A complete bipartite graph $G$ with $n$ vertices on each side, and $m$ weight functions $w_1,\dots, w_m : E \to \{0,1\}$, where $E$ is the set of edges in G. In addition, a target number $k$.}
  	 		{Does there exists a perfect matching $M$ such that $w_\ell(M) := \sum_{e \in M} w_\ell(e) \ge \floor{\frac{n}{2}} + 1$ for at least $k$ weight functions $w_\ell$, $\ell \in\{1,\dots, m\}$? }
			
	\begin{theorem}\label{thm:manyHard}
	\textsc{Multi-Dimensional Team Order} is NP-complete even when the winning probabilities are degenerate.
	\end{theorem}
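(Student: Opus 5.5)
Membership in NP is straightforward: a perfect matching $M$ is a polynomial-size certificate, and for each $\ell\in\{1,\dots,m\}$ we compute $w_\ell(M)$, compare it with $\floor{\frac{n}{2}}+1$, and count how many of the $m$ functions meet the threshold. The substance of the proof is NP-hardness, which I would obtain by a reduction from \textsc{Exact Cover by 3-Sets}. (A reduction from \textsc{Maximum Independent Set} is a fallback option, but X3C fits the matching structure more naturally.)

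\emph{Construction.} Take an X3C instance with universe $U=\{u_1,\dots,u_{3q}\}$ and a family $\mathcal{C}=\{C_1,\dots,C_m\}$ of 3-sets. I introduce one opposing team, i.e.\ one weight function $w_\ell$, per element $u_\ell$, and set $k=3q$, so every element's team must be beaten. On the left side I place $m$ ``set players'' $s_1,\dots,s_m$. On the right side I place $m$ ``slot'' vertices $r_1,\dots,r_m$, with exactly $q$ of them designated ``active''. The intended behaviour is that a set player matched to an active slot beats, in team $\ell$, exactly when its set contains $u_\ell$. Since the line-ups of the opposing teams are fixed, I can do this directly: put $w_\ell(s_i,r)=1$ if $r$ is active and $u_\ell\in C_i$, and $w_\ell(s_i,r)=0$ otherwise.

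\emph{Threshold gadget.} The threshold is $\floor{\frac{n}{2}}+1$, not $1$, so I add padding. I add $P$ dummy pairs $(d_j,d'_j)$ with $w_\ell(d_j,d'_j)=1$ for every $\ell$, and all other edges incident to a dummy get weight $0$ in every coordinate. Then $n=m+P$, and I choose $P$ so that $\floor{\frac{n}{2}}+1=P+1$; for instance $n=2P+1$ requires $m=P+1$, which I arrange by adding isolated all-zero filler vertices. The dummy edges are never worse than the alternatives, so some optimal matching uses them, and each team then needs exactly one more win coming from the set players on the active slots. Hence element $u_\ell$'s team is beaten if and only if $u_\ell$ is covered by one of the $q$ chosen sets, and all $3q$ teams are beaten if and only if the chosen sets cover $U$. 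Because we choose $q$ sets of size $3$ covering $3q$ elements, such a cover is automatically exact. The converse direction, turning an exact cover into a matching, is immediate from the construction.

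\emph{Main obstacle.} The delicate step is tuning the parity and counts so that $\floor{\frac{n}{2}}+1$ is precisely $P+1$. I also need to make sure that a matching which deviates from the dummy pairs cannot gain, which holds because dummy vertices have zero weight to everything else. Finally, I must allow the filler vertices to absorb the non-chosen set players at zero weight. All of these are bookkeeping with complete bipartite graphs, since missing edges are modelled by weight $0$.
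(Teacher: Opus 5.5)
Your reduction is correct, but it takes a different route from the paper's. The paper reduces from \textsc{Hitting Set}. Each opposing team is a subset $s\in S$, and the $x$ players of $N_1$ select which $x$ positions (elements of $Q$) they occupy. The padding consists of players who beat everyone ($N_2$) or lose to everyone ($N_3$), so it yields the same number of wins under every matching and no exchange argument is needed.

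You use the dual encoding: teams are elements, and the $q$ active slots select which set players occupy them. You start from X3C, though any set-cover variant would do, since exactness comes for free from the count $3q$. Your dummy pairs win only against their own partners, which is why you need the extra step you flag. Given any perfect matching $M$, keep its edges between set players and active slots, insert all $P$ dummy pairs, and complete arbitrarily on the remaining set players and slots. Every other edge has weight $0$ in every coordinate, so this weakly increases $w_\ell$ for every $\ell$ at once. That simultaneous monotonicity is exactly what the multi-dimensional objective requires.

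Your approach also calibrates the threshold explicitly. The condition $P=\lfloor n/2\rfloor$ holds with $P=m$ or $P=m-1$ directly, so the filler vertices are unnecessary, and the inactive slots already absorb the unchosen set players. The paper's write-up is imprecise exactly at this point. It asserts $|N_2|=|N_3|=\lfloor n/2\rfloor$, which contradicts its own definition $|N_2|=\frac{\lambda-x}{2}$. Under that definition $T_0$ would need $\frac{x}{2}+1$ wins from $N_1$ rather than one. The fix is $|N_2|=\frac{\lambda}{2}$ and $|N_3|=\frac{\lambda}{2}-x$, after padding $Q$ with unused elements so that $x\le\frac{\lambda}{2}$.

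A small notational point: you reuse $m$ for the number of 3-sets, whereas in the problem statement $m$ is the number of weight functions (here $3q$).
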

	
	\begin{proof}
	The problem is in NP because given a perfect matching $M$, we can efficiently verify if $w_i(M) \ge \floor{\frac{n}{2}} + 1$ holds for at least $k$ weight functions. To show that the problem is NP-hard, we use a reduction from the \textsc{Hitting Set} problem, which is known to be NP-complete~\cite{GaJo79a}.
	
	An instance of \textsc{Hitting Set} is given by a set of elements $Q$, a collection $S$ of subsets of $Q$, and an integer $x$. 
	It is a yes-instance if there exists a size-$x$ set $q \subseteq Q$ such that $q \cap s \neq \emptyset$ for all $s\in S$.
	Let $\lambda = |Q|$.
	Without loss of generality, we can assume that $\lambda \ge x$, and both $\lambda$ and $x$ are even numbers (i.e., we can always modify an instance by adding one dummy subset or element).
	
	We now reduce a \textsc{Hitting Set} instance $(Q,S,x)$ to \textsc{Multi-Dimensional Team Order}.
	We create a bipartite graph $G$ with $\lambda$ vertices on each side; hence, $n = \lambda$. 
	We partition the indices $1,\dots, n$ of the players into three groups: 
	$N_1 = \{1, \dots, x\}$, $N_2 = \left\{x+1, \dots, x+ \frac{\lambda - x}{2} \right\}$, and $N_3 = \left\{x + \frac{\lambda - x}{2}, \dots, \lambda \right\}$.
	Let there be $m = |S|$ weight functions; hence, each weight function corresponds to a subset in $S$.
	For all $\ell = 1, \dots, m$, we define each weight function $w_\ell$ as follows.
	\begin{itemize}
	\item
	Let $w_\ell(\{i, j\}) = 1$ for all $i \in N_2$ and $j \in N$. Namely, players of $T_0$ with indices in $N_2$ win against all other players.
	
	\item
	Let $w_\ell(\{i, j\}) = 0$ for all $i \in N_3$ and $j \in N$. Namely, players of $T_0$ with indices in $N_3$ lose against all other players. 
	
	\item
	Let $w_\ell(\{i, j\}) = 1$ if the $\ell^{\text{th}}$ subset in $S$ contains the \gregR{$j^{\text{th}}$} element in $Q$.
	\end{itemize}
	Finally, we set the target number to be $k = m$. Namely, we want to decide if $T_0$ can win all the other teams by using some line-up. 
	
	Indeed, since every player in $N_2$ always wins and every player in $N_3$ always loses, it only matters how players in $N_1$ are matches to the other side of $G$. Moreover, since $|N_2| = |N_3| = \floor{\frac{n}{2}}$, to ensure that $T_0$ win every other team $T_\ell$, at least one of their player in $N_1$ must win $T_\ell$. This directly corresponds the answer to the above constructed \textsc{Multi-Dimensional Team Order} instance to that to the \textsc{Hitting Set} instance.
	
	More specifically, suppose that $(Q,S,x)$ is a yes-instance, with $q \subseteq Q$ being a size-$x$ set such that $q \cap s \neq \emptyset$ for all $s\in S$. We can connect each $i \in N_1$ to a unique vertex $j \in q$. As a result, for every $\ell = 1,\dots,m$, we have $w_\ell(\{i, j\}) = 1$ for at least one vertex $j \in q$, and $w_\ell(\{i, j\}) = 1$ for exactly $\floor{\frac{n}{2}}$ vertices $j \notin q$ (who are matched with vertices in $N_2$ and $N_3$). Hence, $w_\ell(M) \ge \floor{\frac{n}{2}} + 1$ and $T_0$ wins all other teams under this matching.
Conversely, if $(Q,S,x)$ is a no-instance, then no matter how we match vertices in $N_1$ to the other side, there will be some weight function $w_\ell$ with $w_\ell( \{i,M(i)\}) = 0$ for all $i \in N_1$, in which case we have $w_\ell(M) < \floor{\frac{n}{2}} + 1$, and $T_0$ cannot win all other teams with any line-up.
\end{proof}

        \subsection*{Strategic Behavior of Both Teams}

		    	%In what follows, we show that the optimal strategies are clear both in the case where both teams play pure strategies and when they play mixed strategies. 
			Another natural extension is the case where both teams are strategic and \gregR{decide} their line-ups simultaneously. 
			We analyze the Nash equilibrium of the resulting game. 
			Indeed, since a Nash equilibrium may not exists when players (i.e., the teams in our setting) use pure strategies (i.e., deterministic line-ups), we consider mixed strategies: each team selects a distribution of line-ups as their strategy. 
			Expected utilities are taken over both teams' line-up distributions, where the utility of a team for a pair of deterministic line-ups is their probability of winning the majority of the games under the line-ups. 
   % \greg{I think that this point requires further discussion. Why the majority and not, e.g., more than any other team?} % JG: there are only two teams now; this is not the setting with more than two teams.
It turns out that when line-ups are selected uniformly at random by both teams, a Nash equilibrium is formed.
	
		    	\begin{theorem}
		    		Each team choosing their line-ups uniformly at random is a Nash equilibrium.
		    	\end{theorem}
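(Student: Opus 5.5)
The approach is to show that against a uniformly random line-up, every line-up of the deviating team gives the same expected utility. Then no unilateral deviation, pure or mixed, can be profitable, and the uniform strategy is a best response for each team.

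Identify a pair of line-ups with permutations $\sigma_1,\sigma_2$ of $\{1,\dots,n\}$. Position $k$ pits $t_1^{\sigma_1(k)}$ against $t_2^{\sigma_2(k)}$, so the induced perfect matching of the corresponding graph $G$ is $\pi=\sigma_2\circ\sigma_1^{-1}$, with $t_1^i$ matched to $t_2^{\pi(i)}$. The utility of $T_1$ for this pure profile depends only on $\pi$. Call it $u_1(\pi)$: the probability that a Poisson binomial variable with parameters $p_{i,\pi(i)}$ is at least $\lfloor \frac{n}{2}\rfloor+1$, which is well defined by Proposition~\ref{pro:winning}. Likewise, $T_2$'s utility is some function $u_2(\pi)$, the probability that $T_2$ wins a majority under the matching $\pi$.

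The key step is a group-theoretic observation. Fix $T_1$'s line-up $\sigma_1$ and let $\sigma_2$ be uniform over the symmetric group $S_n$. Then $\pi=\sigma_2\circ\sigma_1^{-1}$ is uniform over $S_n$, because right multiplication by a fixed permutation is a bijection of $S_n$. Hence $T_1$'s expected utility equals $\frac{1}{n!}\sum_{\pi\in S_n}u_1(\pi)$ for every pure $\sigma_1$. By linearity, the same constant is obtained for every mixed strategy of $T_1$, including the uniform one. So no deviation of $T_1$ is profitable.

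The symmetric argument applies to $T_2$. Fix $\sigma_2$ and take $\sigma_1$ uniform; then $\sigma_1^{-1}$ is uniform, so $\pi=\sigma_2\circ\sigma_1^{-1}$ is uniform again. $T_2$'s expected utility is therefore the constant $\frac{1}{n!}\sum_{\pi}u_2(\pi)$, independent of its strategy. Since each team's uniform strategy is a best response to the other's, the profile is a Nash equilibrium.

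I do not expect a real obstacle. The only care needed is to state precisely that utilities depend only on the induced matching, and that the uniform distribution on $S_n$ is invariant under composition with a fixed permutation on either side. It is also worth remarking that the argument does not need the game to be constant-sum (ties are possible when $n$ is even), nor any structure of the probabilities $p_{i,j}$.
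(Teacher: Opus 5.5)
Your proof is correct and follows essentially the same route as the paper. Both arguments rest on the fact that composing a uniformly random line-up with a fixed permutation leaves it uniform, so against a uniform opponent every line-up of the deviating team yields the same utility; yours is simply more explicit, writing the induced matching as $\sigma_2\circ\sigma_1^{-1}$ and checking both teams separately where the paper appeals to symmetry.
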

		    		\begin{proof}
		    	     It suffices to show that when one of the teams, say $T_1$, chooses a line-up uniformly at random, 
			      all line-ups of the other team, \gregR{say} $T_2$, yield the same utility for both teams.
			     In particular, we show that in this case, $T_2$ using the line-up $1,\dots,n$ results in the same utility as using any other line-up $\pi(1),\dots, \pi(n)$, where $\pi: \{1,\dots, n\} \mapsto \{1,\dots, n\}$ is an arbitrary permutation of $1,\dots, n$.
			     
			     Indeed, it is easy to see that if $T_1$ also permutes every line-up according to $\pi$, and selects one of the permuted line-ups uniformly at random, then $T_2$ using $\pi(1),\dots, \pi(n)$ results in a situation identical to the one where $T_2$ uses $1,\dots,n$ against the strategy of $T_1$ that is not permuted. But since $\pi$ is a one-to-one correspondence, the strategy of $T_1$ after permutation is effectively the same as \gregR{choosing} one of the line-ups uniformly at random. Hence, $\pi(1),\dots, \pi(n)$ performs equally well as $1,\dots,n$ does. This completes the proof.
			     \end{proof}

			Since the game between the two teams is essentially zero-sum, it is well-known that a Nash equilibrium is also a Stackelberg equilibrium in this case~\cite{vNeu28a}. This means that in the setting where one team has to commit to a line-up, possibly a randomized one, it is optimal to just choose a line-up uniformly at random. (If the team can only commit to a deterministic line-up, every line-up is effectively the same.)
	
		    	\begin{corollary}
		    		Choosing a line-up uniformly at random is an optimal commitment.
		    	\end{corollary}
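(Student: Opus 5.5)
The corollary should follow from the preceding theorem together with the minimax theorem for two-player zero-sum games, so the plan is to reduce to that. First I will make the zero-sum structure explicit. Fix the line-up of $T_2$ as $\sigma$ and that of $T_1$ as $\tau$. Assume $n$ is odd, or more generally that ties are treated so that each pair $(\tau,\sigma)$ yields a utility $u_1(\tau,\sigma)=\Pr{T_1 \text{ wins}}$ for $T_1$ and $u_2(\tau,\sigma)=1-u_1(\tau,\sigma)$ for $T_2$. Then the game is constant-sum, which is strategically equivalent to zero-sum. If ties can occur for even $n$, I will instead define $T_2$'s payoff as the negation of $T_1$'s winning probability. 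The committing team's guarantee is then still a max--min of a single payoff matrix, and the argument goes through unchanged.

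Next I will use the symmetry argument from the preceding theorem to compute the value directly, rather than only citing von Neumann. Let $\mu$ be the uniform distribution over the $n!$ line-ups. Suppose the committing team, say $T_1$, commits to $\mu$. The proof of the theorem shows that $\sum_{\tau} u_1(\tau,\sigma)/n!$ does not depend on $\sigma$; call this common value $v$. So $\mu$ guarantees $T_1$ exactly $v$ against every response of $T_2$, pure or mixed.

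The optimality step is the one needing care. I must show that no commitment $\nu$ of $T_1$ guarantees more than $v$. Here I will apply the theorem from the other side: $T_2$ playing uniformly also makes every line-up of $T_1$ yield the same utility. By the same permutation argument, that utility is again $v$, because it is the average of $u_1$ over all pairs. Hence for any $\nu$, the response ``$T_2$ uniform'' holds $T_1$ to exactly $v$. It follows that $\min_{\sigma}\Ex{u_1(\nu,\sigma)} \le v$, so $\mu$ attains the max--min. Equivalently, the uniform profile is a Nash equilibrium of a zero-sum game, so by~\cite{vNeu28a} each equilibrium strategy is a max--min (Stackelberg) strategy.

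Finally, for the parenthetical remark: against a deterministic commitment $\tau$, $T_2$ best-responds, and by relabeling every $\tau$ faces the same set of outcomes, namely $\min_\sigma u_1(\tau,\sigma)=\min_{\pi} u_1(\mathrm{id},\pi)$. So all deterministic commitments are equivalent. The same argument applies with the roles of $T_1$ and $T_2$ swapped.
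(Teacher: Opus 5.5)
Your proposal is correct and follows the paper's route. The paper combines the uniform Nash equilibrium from the preceding theorem with the fact that, in a constant-sum (hence strategically zero-sum) game, equilibrium strategies are max--min strategies, i.e.\ optimal commitments. You do the same, except that you prove this last step directly by the averaging argument (uniform guarantees $v$ against every $\sigma$, while $T_2$'s uniform response holds every commitment $\nu$ to exactly $v$) rather than only citing von Neumann.

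Your explicit handling of even $n$ also sharpens the paper's ``essentially zero-sum''. With ties the game is not constant-sum, and a follower maximizing its own winning probability need not minimize yours. Interpreting ``optimal commitment'' as the max--min guarantee, as you do, is the reading that makes the statement true in that case.
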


\end{document}